\DeclareMathOperator*{\argmine}{arg\, }
\newtheorem{thm}{Theorem}
\newtheorem{cor}{Corollary}
\newtheorem{lem}{Lemma}
\newtheorem{prop}{Proposition}
\begin{document}
\title{Toward Traffic Patterns in High-speed Railway Communication Systems: Power Allocation and Antenna Selection}



\author{Jiaxun~Lu, Ke~Xiong,~\IEEEmembership{Member, ~IEEE}, Xuhong~Chen, Pingyi~Fan,~\IEEEmembership{Senior Member, ~IEEE}\\

\thanks{
Jiaxun~Lu, Ke~Xiong, Xuhong~Chen and Pingyi Fan are with the Department of Electronic Engineering,  Tsinghua University,  Beijing,  R.P. China, 100084. e-mail: lujx14@mails.tsinghua.edu.cn, kxiong@bjtu.edu.cn, chenxh13@mails.tsinghua.edu.cn, fpy@tsinghua.edu.cn. Ke~Xiong is also with the School of Computer and Information Technology,  Beijing Jiaotong University,  Beijing,  R.P. China, 100084.
}
}

\maketitle

\graphicspath{{Figures/}}

\begin{abstract}

In high-speed railway (HSR) communication systems, distributed antenna is usually employed to support frequent handover and enhance the signal to noise ratio to user equipments. In this case, dynamic time-domain power allocation and antenna selection (PAWAS) could be jointly optimized to improve the system performances. This paper consider this problem in such a simple way where dynamic switching between multiple-input-multiple-output (MIMO) and single-input-multiple-output (SIMO) is allowed and exclusively utilized, while the channel states and traffic demand are taken into account. The channel states includes sparse and rich scattering terrains, and the traffic patterns includes delay-sensitive and delay-insensitive as well as hybrid. Some important results are obtained in theory. In sparse scattering terrains, for delay-sensitive traffic, the PAWAS can be viewed as the generalization of channel-inversion associated with transmit antenna selection. On the contrary, for delay-insensitive traffic, the power allocation with MIMO can be viewed as channel-inversion, but with SIMO, it is traditional water-filling. For the hybrid traffic, the PAWAS can be partitioned as delay-sensitive and delay-insensitive parts by some specific strategies. In rich scattering terrains, the corresponding PAWAS is derived by some amendments in sparse scattering terrains and similar results are then presented.

\end{abstract}

\begin{IEEEkeywords}
Mobility, green communication, queening theory, traffic demand, power allocation, antenna selection.
\end{IEEEkeywords}

\IEEEpeerreviewmaketitle

\section{Introduction}\label{Sec:Introduction}

\lettrine[lines=2]{M}{obility} and green are two important indexes in 5G\cite{andrews2014will}. As one of the typical high mobility scenario, high-speed railway (HSR) is experiencing explosive growth in recent decades, where distributed antenna system (DAS) is usually employed to avoid frequent handover and improve signal to noise ratio (SNR) for receivers \cite{yeh2010theory,wang2012distributed}. In HSR system, the high velocity of train and large spacing between adjacent antennas, leading to the fast time-varying and wide fluctuant large-scale path fading, create special challenges for efficient energy utilization, i.e. green communication. A general strategy to combat these detrimental effects is adopting dynamic resource allocation based on the channel states, e.g. transmit power allocation and antenna selection. For example, time-domain water-filling was adopted to achieve the best energy-efficiency in time-varying or frequency-selective channels and antenna selection could be used to adaptively find the best antenna pattern to transmit with.

In the literature, most existing works on dynamic resource allocation separately considered the power allocation and antenna selection for HSR. In this paper, we address the problem on how to minimize average transmit power by jointly optimizing time-domain power allocation associated with antenna selection (PAWAS) in high mobility cases? We focus on the downlink scenario, where multiple mobile relays (MRs) are mounted on the carriages, forming the two-hop architecture and multiple-input channel\cite{zhang2015optimal,lu2016location}. In this case, when single transmit antenna is selected, it's a single-input-multiple-output (SIMO) transmission, whereas when multiple transmit antennas are selected, it's a multiple-input-multiple-output (MIMO) transmission. In discussed case, it is allowed to adaptively switch the transmit mode between SIMO and MIMO and optimally allocate power along time and among antennas according to system traffic demand, where the different traffic patterns will be considered.

The first one is the delay-sensitive, where the delay limit is small compared to the time-scale of channel fading. In this category, the arrived data needs instant service to meet its delay constraint. The second one is delay-insensitive, where delay is not an key issue and with relatively loose requirement. For instance, in the recently proposed caching technique\cite{shanmugam2013femtocaching,gitzenis2013asymptotic,cui2016analysis}, where popular contents can be cached in the MRs to support user's quick access to massive content, the delay of caching is insignificant. More generally, we characterize the traffic with delay limit longer than one operation duration of HSR as delay-insensitive traffic.

In delay-sensitive category, we prove that the time-domain power allocation can be viewed as channel-inversion associate with transmit antenna selection, since the transmit capacity in each time-slot is determined by relevant traffic arrival rate and queening delay. By contrast, in delay-insensitive category, one difficulty is that the energy efficiency of MIMO and SIMO varies with respect to allocated power and position of train, and hence it's hard for us to select the optimal transmit antennas corresponding to minimized average power, especially when the exact allocated power at each time-slot is not prior known. In the analyzed HSR system, we show that despite the diversity gain of $2\times2$ MIMO system doubles that in $1\times2$ SIMO systems\footnote{Consider the long distance between adjacent transmit antennas, at most two transmit antennas can be achieved by MRs. For practical deployment, two MRs are equipped on the carriage.}, the received signal power of MIMO can be rewritten as the product of transmit power and \emph{effective channel gain}, which depends only on the transmit power and fading coefficients of MIMO channel. This is first observed in this paper.

Applying the notion of effective channel gain, we derive the optimal PAWAS for delay-insensitive traffic. The transmit antenna is selected by simply comparing the effective channel gain of MIMO with the actual channel gain of SIMO. We show that when MIMO is selected, the optimal power allocation can be viewed as the generalization of channel-inversion, whereas when SIMO is selected, it is traditional water-filling. Practical traffic usually include both delay-sensitive and -insensitive traffic simultaneously. To this hybrid traffic pattern, we prove that the optimal PAWAS can be partitioned into delay-sensitive and -insensitive parts by some specific strategies.

Nevertheless, above results are derived in sparse scattering scenarios, such as viaducts and wide plains. This paper also considers rich scattering scenarios, such as mountain, tunnels, urban and sub-urban districts, where the small-scale propagation channel can be modeled as Nakagami-$m$ channel\cite{li2013channel,dong2012varepsilon,liu2014novel}. We prove that the ergodic capacity loss is related to specific antenna schemes and $m$: the capacity loss in $2\times2$ MIMO doubles that in $1\times2$ SIMO with same $m$. By few amendments, previous results can be easily extended to rich scattering scenarios. Moreover, simulation results show that MIMO is suggested when train is near the middle of adjacent transmit antennas or in cases with heavy traffic demand. This is because, more degree of freedom in MIMO reduces instantaneous transmit power. In both sparse and rich scattering scenarios, we will show that our proposed PAWAS can provide minimized average transmit power for arbitrary traffic demand. In addition, we also consider the maximum transmit power constraint and prove that low value of maximum transmit power may greatly reduce the effectiveness of PAWAS.

In order to highlight our contributions compared with existing works, we stress some existing works as follows. Dong \emph{et al.} introduced time-domain power allocation to HSR in \cite{Dong2014Efficiency} with respect to proportionally fair criterion and Zhang \emph{et al.} reconsidered it for minimizing average transmit power with delay constraints\cite{zhang2015optimal}. The optimal power allocation is derived in single-input-single-output (SISO) case. Also, \cite{zhang2015optimal} assumed deterministic arrived traffic with hard delay constraint. Our results are with DAS and multiple receive antennas, and two-layer traffic mode is involved with more practical stochastic arrived data and average delay constraint. In \cite{li2015qos}, the authors proposed the power allocation in SISO for two different traffic patterns discussed in this paper, but only the transmission rate is considered. This paper matches the arrival and transmission process by queening theory and derive the optimal service mode for arbitrary arrived data. In \cite{heath2001antenna,sanayei2004antenna}, the authors analyzed the performance of MIMO and SIMO, and the antenna selection strategies were proposed to maximize channel capacity. In \cite{hui2014efficiency}, the antenna selection method in static large-scale MIMO scenario was proposed for minimizing average transmit power. Since this paper considers antenna selection as train running along the railway, it can be viewed as a generalization in dynamic scenario. Specifically, when the velocity of train is close to zero, our proposed method can also be applied to previous static scenario.

The rest of this paper is organized as follows. { In Section \ref{Sec:SystemModel}, the system model is introduced. Also, a simple example is shown to illustrate the necessity of PAWAS, and the problem on optimal PAWAS is formulated. In section \ref{Sec:PAWASInDSAWGN}, PAWAS for delay-sensitive, -insensitive and hybrid traffic in sparse scattering scenarios is proposed, respectively. Section \ref{Sec:PAWASInHybridNakagami} considers rich scattering scenarios under Nakagami-$m$ fading model and proposes relevant PAWAS. In Section \ref{Sec:NumericalRes}, the validity of previous theoretical results and the effectiveness of our derived PAWAS are verified by numerical results. Finally, conclusions are given in Section \ref{Sec:Conclusion}.}

\emph{Notations}: $(\cdot)^{\dagger}$ and $(\cdot)^{+}$ denote the conjugate transpose of $(\cdot)$ and $\max((\cdot),0)$, respectively. The symbols $\mathbf{I}$ denotes the identical matrix. $\det(\cdot)$ is the determinant of $(\cdot)$. $(\cdot)\equiv(\ast)$ means $(\cdot)$ is equivalent to $(\ast)$. $(\cdot)\doteq(\ast)$ means $(\cdot)$ equals to $(\ast)$ by definition. $\mathbb{E}(\cdot)$ is the expectation of $(\cdot)$.

\section{System Model and Problem Formulation}\label{Sec:SystemModel}

{ This section first illustrates the network model of downlink in HSR communication, where the DAS and two-hop architecture are adopted. Then, the system capacity relevant to MIMO and SIMO are presented, and the problem on optimal PAWAS is stressed and mathematically formulated.}

\subsection{Network Model}\label{Sec:NetWorModel}

\begin{figure*}[htbp]
\centering
\includegraphics[width=0.85\textwidth]{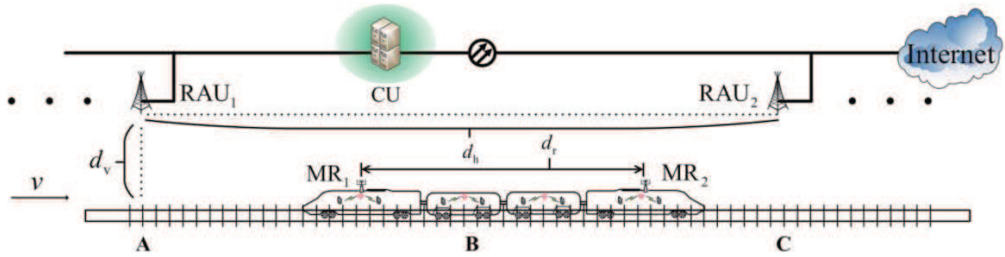}
\caption{Distributed antennas on railway coverage system, where the RAUs are connected to CU with optical fibers.} \label{Fig:DistributedAntenna}
\end{figure*}

The downlink of DAS is shown in Fig. \ref{Fig:DistributedAntenna}, where the central unit (CU) is connected with several remote antenna units (RAUs) by optical fibers\cite{lu2016location,yeh2010theory,wang2012distributed}. At the transmit side, CU is connected with Internet and sends the base band signals to corresponding RAUs, which modulate and transmit signals to receivers. As shown in Fig. \ref{Fig:DistributedAntenna}, each RAU is located at a vertical distance of $d_v$ from the railway and the distance between adjacent RAUs is $d_h$. { Due to the limited per antenna module power\cite{choi2007downlink}, the maximum normalized transmit power of each RAU is constrained by $\mathcal{P}_{\rm{max}}$.} In this paper, if not specified, the transmit power is normalized by double-side noise power.

\begin{figure}[htbp]
\centering
\includegraphics[width=0.4\textwidth]{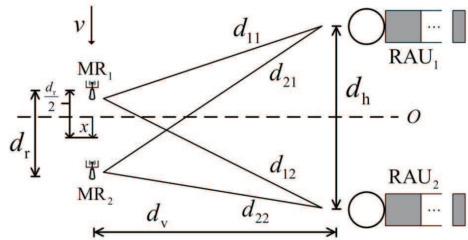}
\caption{The positions of each node on a reference coordinate system.} \label{Fig:MultipleAntennaModel}
\end{figure}

At the receive side, { the popular two-hop architecture\cite{wang2012distributed,zhang2015optimal} is adopted to enhance communication quality.} In the downlink, RAUs first transmit signals to the mobile relays (MRs) deployed on the carriage. Then MRs forward signals to user equipments inside train. As depicted in Fig. \ref{Fig:MultipleAntennaModel}, the distance between them is $d_{\rm{r}}$, and the distance between ${\rm{MR}}_i$ and ${\rm{RAU}}_j$ is denoted as $d_{ij}$. Let the dash-line located at the middle of two RAUs as the original ($O$), the vertical distance between the middle of two MRs and $O$ is $x=vt$. In this paper, because the distance between MRs and RAUs varies periodically and is symmetric about $O$, we take the half period from 0 to ${T}/{2}$ for example, where $T=d_h/v$.

\subsection{Capacity Model}\label{SubSec:CapacityModelOfAWGN}

{ As shown in Fig. \ref{Fig:MultipleAntennaModel}, when both $\rm{RAU}_1$ and $\rm{RAU}_2$ are selected to transmit signals, it's MIMO transmission. When only $\rm{RAU}_2$ is selected to transmit signals, it's SIMO transmission.} Define the channel fading matrix of MIMO as
\begin{equation}\label{equ:HExpansion}
\mathbf{H}(t) = \begin{bmatrix} h_{11}(t) & h_{12}(t) \\ h_{21}(t) & h_{22}(t) \end{bmatrix},
\end{equation}
where $h_{ij}$ ($i,j=1, 2$) is the channel coefficient between the $j$-th $\rm{RAU}$ and the $i$-th MR. When $\rm{RAU}_1$ and $\rm{RAU}_2$ transmit signals independently with equal power, the system capacity can be expressed as\cite{lu2014precoding,lu2016subcarrier}
\begin{equation}\label{equ:MIMOCapacityExpressionOriginal}
C_{\rm{M}}(t) = \log_2 \det \left( \mathbf{I} + \begin{bmatrix} \mathcal{P}(t)/2 & 0 \\ 0 & \mathcal{P}(t)/2 \end{bmatrix}  \mathbf{H}(t) ^{\dagger} \mathbf{H}(t)\right),
\end{equation}
where $\mathcal{P}(t)$ is the overall transmit power. Substitute \eqref{equ:HExpansion} into \eqref{equ:MIMOCapacityExpressionOriginal}, the MIMO capacity can be rewritten as
\begin{equation}\label{equ:MIMOCapacityLOS}
C_{\rm{M}}(t) = \log_2 \Big\{ \frac{\alpha_1\alpha_2 - \beta^2}{4} \mathcal{P}(t)^2 + \frac{\alpha_1 + \alpha_2}{2} \mathcal{P}(t) + 1 \Big\},
\end{equation}
where
\begin{equation}\label{equ:HHExpansionElements}
\begin{cases}
\alpha_1 &= h_{11}^2 + h_{21}^2, ~ \alpha_2 = h_{12}^2 + h_{22}^2,\\
\beta &= h_{11}h_{12} + h_{21}h_{22}.\\
\end{cases}
\end{equation}
Note that the index $t$ of $h_{ij}(t)$ and $\alpha_{i}(t)$ ($i,j=1,2$) is omitted for the concise of paper.

In SIMO with maximal ratio combining, the maximum achievable capacity can be expressed by
\begin{equation}\label{equ:SIMOCapacityLOS}
C_{\rm{S}}(t) = \log_2\big( 1 + \mathcal{P}(t) \alpha_2 \big),
\end{equation}
where $\alpha_2$ is defined in \eqref{equ:HHExpansionElements}.


\subsection{Problem Formulation}
{ In this part, we shall first stress the necessity of joint power allocation and antenna selection. Then, the PAWAS problem is mathematically formulated.}

\begin{figure}[tbp]
\centering
\subfigure[]{ \label{Fig:DS9bpcu}
\includegraphics[width=0.3\columnwidth]{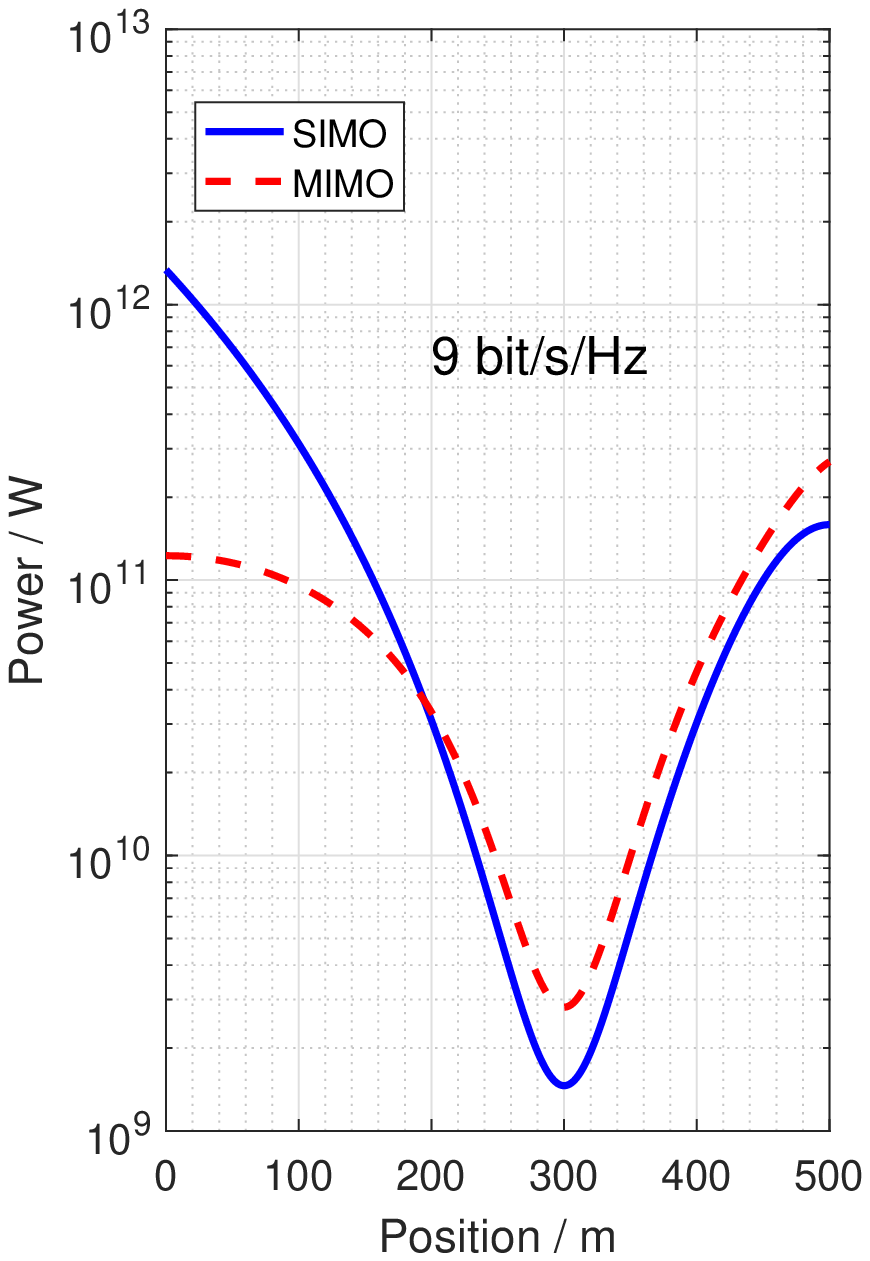}}
\subfigure[]{ \label{Fig:DS12bpcu}
\includegraphics[width=0.3\columnwidth]{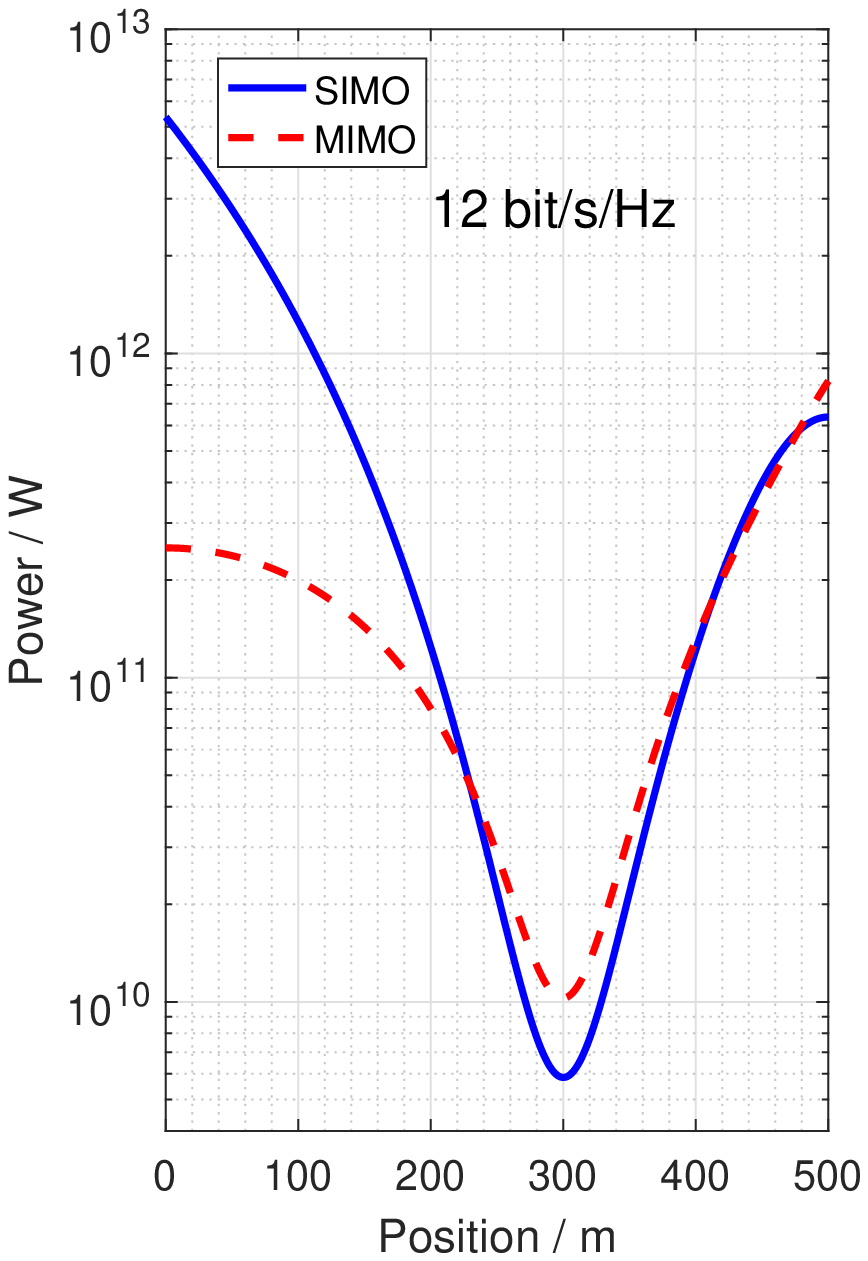}}
\caption{The transmit power of SIMO and MIMO schemes when providing constant service rates. In (a) and (b), C(t) is 9 and 12 bit/s/Hz, respectively. In simulations, $d_{\rm{r}}$=400 m, $d_h$=1000 m and $d_v$=100 m. The fading model is shown in \eqref{equ:LOS_h_ij} and $\iota$=3.8.} \label{Fig:DSInVariousCapacity}
\end{figure}
\subsubsection{A Simple Example}~~

Fig. \ref{Fig:DSInVariousCapacity} shows the transmit power of MIMO and SIMO in sparse scattering scenarios when providing { service rate with 9 bit/s/Hz and 12 bit/s/Hz, respectively.} The dashed lines demonstrate the transmit power of MIMO, and the solid lines are the power of SIMO. { In Fig. \ref{Fig:DS9bpcu}, one can see that when $x\in[0,200]$, MIMO consumes less power than SIMO, because in this case MRs are near to both transmit antennas, and hence MIMO provides more diversity gain to enhance power efficiency. However, in the rest cases, where $\rm{RAU}_1$ is relatively far away from MRs, SIMO may outperform MIMO. In Fig. \ref{Fig:DS12bpcu}, it shows that with increased service rate, MIMO consumes less power in more cases.}

{ With this example, we can conclude that the optimal antenna scheme may vary with respect to the location of train and the service rate demand. More specifically, in cases with less distance to RAUs or with higher service rates, MIMO consumes less power than SIMO.}\footnote{The quantized relations will be analyzed in following sections.} Nevertheless, this example is based on constant service rate. When time-domain power allocation is involved, the system service rate may vary with positions according to its overall service amount demand. That is, the optimal PAWAS is supposed to be \emph{location} and \emph{service demand} aware. Also, to meet the service demand with minimized average transmit power in half period (i.e. $[0,T/2]$), time-domain power allocation and antenna selection need to be jointly optimized.

\subsubsection{Mathematical Formulation}\label{Sec:subsubMathmaticalFormulation}~~

{ The Poisson distributed traffic arrivals and negative exponential distributed package length are assumed.} As Berry's assumption in \cite{berry2013optimal}, the delay of delay-sensitive traffic is small relative to the coherence time of large-scale fading. For delay-sensitive traffic, the arrival process in Fig. \ref{Fig:MultipleAntennaModel} is modeled as M/M/1 queuing system. Let $\tau_{\rm{max}}$ be the maximum average queening delay, and denote the arrival rate and service rate of delay-sensitive traffic as $\lambda_{\rm{s}}$ and $\mu_{\rm{s}}$, respectively. The relevant traffic load can be expressed as $\rho_{\rm{s}}=\lambda_{\rm{s}}/\mu_{\rm{s}}$ and the average queuing delay of delay-sensitive traffic can be expressed as\cite{kleinrock1975queueing}
\begin{equation}\label{equ:QueDelay}
\tau = \frac{1}{\mu_{\rm{s}}(1-\rho_{\rm{s}})}\leq \tau_{\rm{max}}.
\end{equation}

The arrival rate and service rate of delay-insensitive traffic are represented by $\lambda_{\rm{i}}$ and $\mu_i(t)$, respectively. To meet the instantaneous traffic demand, the system capacity satisfies $C(t)=\bar{L}(\mu_{\rm{s}} + \mu_i(t))$, where $\bar{L}$ is the average package length normalized by transmit bandwidth. Notice that to minimize average transmit power, $\mu_i(t)$ should be adaptively adjusted with respect to the varying path loss as train running along railway. Moreover, to avoid infinite queuing delay of delay-insensitive traffic, we let $\int_0^{T/2}\mu_i(t) \geq \lambda_{\rm{i}} T/2$. Since for delay-sensitive traffic, there exists $\int_0^{T/2}\mu_s = \lambda_s T/2$, the third constraint shown later in \eqref{equ:MainOptimizeProblemConstraints} can be easily derived.

To demonstrate the optimal PAWAS, we give an illustration of antenna selection first. Observing \eqref{equ:MIMOCapacityLOS} and \eqref{equ:SIMOCapacityLOS}, we can use a general form to express the system capacity for the considered train-ground channel, i.e.
\begin{equation}\label{equ:UniversalCapacity}
C(t) = \log_2\left( 1 + \Gamma(\mathcal{P}(t),t) \mathcal{P}(t) \right),
\end{equation}
where $\Gamma(\mathcal{P}(t),t)$ is the \emph{effective channel gain} defined by
\begin{equation}\label{equ:equivalentChannelFading}
\Gamma(\mathcal{P}(t),t) =  \max\left(\Big(\frac{\alpha_1\alpha_2 - \beta^2}{4}{\mathcal{P}(t)} + \frac{\alpha_1 + \alpha_2}{2}\Big) , \alpha_2\right).
\end{equation}
The first part can be viewed as the generalized channel gain of MIMO, and the second part is the channel gain of SIMO. Define $\Theta$ as the antenna selection mode. When the first part is greater than the second part, system select MIMO and $\Theta=0$. Otherwise, system select SIMO and $\Theta=1$. The antenna selection scheme can be expressed as follows.

\begin{lem}\label{lem:PThreshold_DelayConstrainedMDSInAWGN}
When $\mathcal{P}(t) \geq \zeta_{\mathcal{P}}$, MIMO should be selected, where $\zeta_{\mathcal{P}}$ is the power threshold defined as
\begin{equation}\label{equ:MDSPThreshold}
\zeta_{\mathcal{P}} = \frac{2(\alpha_2-\alpha_1)}{\alpha_1\alpha_2-\beta^2}.
\end{equation}
An equivalent antenna selection criterion is that when $C(t) \geq \zeta_c$, system select MIMO. $\zeta_c$ is the capacity threshold defined as
\begin{equation}\label{equ:MDSRThreshold}
\zeta_c = \log_2\Big( 1+ \frac{2\alpha_2(\alpha_2-\alpha_1)}{\alpha_1\alpha_2-\beta^2}\Big).
\end{equation}
The expression of $\alpha_1$, $\alpha_2$ and $\beta$ are shown in \eqref{equ:HHExpansionElements}, \eqref{equ:LOS_h_ij} and \eqref{equ:distancec_ij}, respectively. More over, when $\mathcal{P}(t) = \zeta_{\mathcal{P}}$, $C(t) = \zeta_c$.
\end{lem}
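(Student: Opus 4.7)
The plan is to translate the MIMO-versus-SIMO decision directly into a comparison of the two terms inside the $\max$ that defines the effective channel gain $\Gamma(\mathcal{P}(t),t)$ in \eqref{equ:equivalentChannelFading}. Since the achievable rate \eqref{equ:UniversalCapacity} is strictly increasing in $\Gamma$, choosing MIMO (i.e.\ $\Theta=0$) is equivalent to the first argument of the $\max$ dominating the second, i.e.
\[
\tfrac{\alpha_1\alpha_2-\beta^2}{4}\,\mathcal{P}(t)+\tfrac{\alpha_1+\alpha_2}{2}\;\ge\;\alpha_2.
\]
Everything in the lemma should then follow from this single scalar inequality.

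The next step is to rearrange to isolate $\mathcal{P}(t)$: moving $(\alpha_1+\alpha_2)/2$ to the right-hand side gives $\tfrac{\alpha_1\alpha_2-\beta^2}{4}\mathcal{P}(t)\ge\tfrac{\alpha_2-\alpha_1}{2}$, and dividing by the positive quantity $(\alpha_1\alpha_2-\beta^2)/4$ recovers $\mathcal{P}(t)\ge\zeta_{\mathcal{P}}$ exactly as in \eqref{equ:MDSPThreshold}. The positivity of $\alpha_1\alpha_2-\beta^2$ is the one algebraic step that deserves justification: writing $\beta=\langle(h_{11},h_{21}),(h_{12},h_{22})\rangle$ and $\alpha_j=\|\cdot\|^2$ of the corresponding column of $\mathbf{H}$, the Cauchy--Schwarz inequality yields $\alpha_1\alpha_2\ge\beta^2$, with equality only on a measure-zero set of degenerate fading realizations; this justifies the sign-preserving division.

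For the equivalent rate-based criterion I would plug $\mathcal{P}(t)=\zeta_{\mathcal{P}}$ into \eqref{equ:UniversalCapacity}. At this boundary point the two arguments of the $\max$ coincide and both equal $\alpha_2$, so $C(t)=\log_2(1+\alpha_2\zeta_{\mathcal{P}})$, and substituting \eqref{equ:MDSPThreshold} yields exactly $\zeta_c$ as defined in \eqref{equ:MDSRThreshold}. Monotonicity of $C(t)$ in $\mathcal{P}(t)$ then promotes the power threshold to a capacity threshold, so $\mathcal{P}(t)\ge\zeta_{\mathcal{P}}\Leftrightarrow C(t)\ge\zeta_c$, which also delivers the final clause of the lemma as the equality case of the promotion.

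The main obstacle, though modest, is handling the sign of $\alpha_2-\alpha_1$ cleanly. When $\alpha_2<\alpha_1$ the threshold $\zeta_{\mathcal{P}}$ is negative, which physically means MIMO dominates for every admissible $\mathcal{P}(t)\ge 0$ and the lemma's inequality is vacuously satisfied; when $\alpha_1=\alpha_2$ the threshold collapses to zero. I would flag both corner cases in a short remark so that \eqref{equ:MDSPThreshold} is interpreted as the tightest power above which MIMO never loses to the $\mathrm{RAU}_2$ SIMO benchmark, rather than as a strict crossing point that must exist for every channel realization.
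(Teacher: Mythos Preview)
Your proposal is correct and follows essentially the same route as the paper's own proof: compare the two arguments of the $\max$ in \eqref{equ:equivalentChannelFading} to obtain \eqref{equ:MDSPThreshold}, then substitute $\zeta_{\mathcal{P}}$ into the SIMO capacity \eqref{equ:SIMOCapacityLOS} to recover \eqref{equ:MDSRThreshold}. Your treatment is in fact more complete, since you supply the Cauchy--Schwarz justification for $\alpha_1\alpha_2-\beta^2>0$, the monotonicity argument linking the power and capacity thresholds, and the corner cases $\alpha_2\le\alpha_1$, all of which the paper leaves implicit.
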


\begin{proof}
From \eqref{equ:equivalentChannelFading}, Eqn. \eqref{equ:MDSPThreshold} can be easily derived. Substitute \eqref{equ:MDSPThreshold} into \eqref{equ:SIMOCapacityLOS}, we can derive Eqn. \eqref{equ:MDSRThreshold}. This completes the proof.
\end{proof}

It can be observed that system always adopt antenna scheme with higher channel gain. That is, the selected antenna scheme can provide higher transmit capacity. In other words, when providing same transmit capacity, the selected antenna scheme minimizes instantaneous transmit power. Also, it can be seen from Lemma \ref{lem:PThreshold_DelayConstrainedMDSInAWGN} that the antenna selection scheme in one time-slot is coupled with instantaneous transmit power. That is, the antenna selection and transmit power need to be jointly optimized, which is expressed by
\begin{flalign}\label{equ:MainOptimizeProblem}
\mathbf{P}_1: \,\,\min_{\Theta,\mathcal{P}(t)} \,\,\, &\frac{2}{T} \int_{0}^{\frac{T}{2}}\mathcal{P}(t) dt\\
\textrm{s.t.}\,\,\,&\eqref{equ:QueDelay},\eqref{equ:UniversalCapacity},C(t)=\bar{L}(\mu_s+\mu_i(t)),\nonumber\\
& \mathcal{P}(t)  \leq \mathcal{P}_{\rm{max}}(t),\nonumber\\
&\int_{0}^{T/2} C(t)  \geq \bar{L}(\lambda_{\rm{s}} + \lambda_{\rm{i}})T/2.\label{equ:MainOptimizeProblemConstraints}
\end{flalign}
The first constraint is the average queuing delay limitation, and the second one is the maximum instantaneous transmit power constraint. Because $\mathcal{P}_{\rm{max}}$ is the maximum module power in one single RAU, the maximum transmit power can be expressed as
\begin{equation*}\label{equ:PMaxExpression}
\begin{split}
\mathcal{P}_{\rm{max}}(t) =&\bigg\{
\begin{array}{lc}
2\mathcal{P}_{\rm{max}},    & \Theta=0\\
\mathcal{P}_{\rm{max}}  ,  & \Theta=1.\\
\end{array}
\end{split}
\end{equation*}
It can be seen that $\mathcal{P}_{\rm{max}}(t)$ is relevant to the specific selected antenna scheme at $t$. The third constraint has been illustrated at the second paragraph of Section \ref{Sec:subsubMathmaticalFormulation}. In addition, to satisfy the demand of delay-sensitive traffic for all $t \in [0,T/2]$, we assume that $\mathcal{P}_{\rm{max}}$ is sufficiently large.

\section{Optimal PAWAS for Various Patterns in Sparse Scattering Scenarios}\label{Sec:PAWASInDSAWGN}

{ In this section, the PAWAS is optimized for delay-insensitive and -sensitive traffic, respectively. Also, their hybrid traffic is discussed. We can describe the traffic pattern by a triple as $(\lambda_{\rm{i}}, \lambda_{\rm{s}}, \tau_{\rm{max}})$. Specifically, for delay-sensitive traffic, $\lambda_{\rm{i}}=0$ and it can be expressed as $(0, \lambda_{\rm{s}}, \tau_{\rm{max}})$. Similarly, for delay-insensitive traffic and hybrid traffic,  they can be expressed as $(\lambda_{\rm{i}}, 0, 0)$ and $(\lambda_{\rm{i}}, \lambda_{\rm{s}}, \tau_{\rm{max}})$, respectively.}

\subsection{Optimal PAWAS for Delay-insensitive Traffic}


{ In sparse scattering scenarios, there are few multi-paths because of little scattering and reflection. In previous works\cite{dong2015power,li2013channel,Dong2014Efficiency}, this propagation channel is considered as AWGN and the channel fading is only determined by large-scale fading. Since the carrier frequency, antenna heights and geography, etc., can be treated as invariant in identical region, their effects on path loss can be omitted and the channel fading coefficient $h_{ij}(t)$ can be simply determined by distance\cite{lu2016location,lu2016locationarXiv,zhang2015optimal}, i.e.}
\begin{equation}\label{equ:LOS_h_ij}
h_{ij}(t) = \big(d_{ij}(t)\big)^{\iota},
\end{equation}
where $\iota$ is the path-loss exponent and $d_{ij}$ (i, j=1, 2) follows
\begin{equation}\label{equ:distancec_ij}
d_{ij}(t) = \sqrt{ d_v^2 + \big( (d_h-(-1)^{i+j}d_{\rm{r}})/2 - (-1)^{j} vt \big)^2 }.
\end{equation}

{ For $(\lambda_{\rm{i}}, 0, 0)$ case, the optimal PAWAS on minimizing the average transmit power in half period can be rewritten as}
\begin{flalign*}
\mathbf{P}_{1\text{-}\rm{A}}:\,\, \min_{\Theta,\mathcal{P}(t)} \,\,\, & \frac{2}{T}\int_{0}^{\frac{T}{2}}\mathcal{P}(t) dt\notag\\
\textrm{s.t.}\,\,\,&\mathcal{P}(t) \leq \mathcal{P}_{\rm{max}}(t),\nonumber\\
& \int_{0}^{T/2}C(t) \geq \lambda_{\rm{i}}T/2.\nonumber\\
\end{flalign*}
{ Note that the first constraint relevant to delay-sensitive traffic in \eqref{equ:MainOptimizeProblemConstraints} is omitted. As previous illustrated, the antenna selection scheme shown in Lemma \ref{lem:PThreshold_DelayConstrainedMDSInAWGN} minimizes instantaneous transmit power. For the average transmit power, it can be minimized by time-domain power allocation, i.e. applying standard Lagrangian method to $\mathbf{P}_{1\text{-}\rm{A}}$. The optimal PAWAS for delay-insensitive traffic can be summarized as follows.}
\begin{prop}\label{prop:NonDelaySensitiveOptimalPowerAndMDS}
For delay-insensitive traffic $(\lambda_{\rm{i}},0,0)$, the optimal time-domain allocated power is
\begin{equation}\label{equ:NDSoptPower}
\mathcal{P}_{\rm{I}}^{*}(t) = \min\left(\mathcal{P}_{\rm{W}}(t) , \mathcal{P}_{\rm{max}}(t)\right),
\end{equation}
and the optimal antenna selection scheme follows Lemma \ref{lem:PThreshold_DelayConstrainedMDSInAWGN}. That is,
\begin{equation*}
\begin{split}
\Theta^* =&\bigg\{
\begin{array}{lc}
0,  & \mathcal{P}_{\rm{I}}^{*}(t) \geq \zeta_{\mathcal{P}}\\
1,  & \mathcal{P}_{\rm{I}}^{*}(t) < \zeta_{\mathcal{P}}.\\
\end{array}
\end{split}
\end{equation*}
$\mathcal{P}_{\rm{W}}(t)$ is the generalized waterfilling power expressed by
\begin{equation}\label{equ:OptimalWaterFillingForNDS}
\mathcal{P}_{\rm{W}}(t) = \bigg[ \frac{\eta\Gamma(\mathcal{P}(t),t) - 1}{\Gamma(\mathcal{P}(t),t)-\eta{\partial \Gamma(\mathcal{P}(t),t)}/{\partial \mathcal{P}(t)}} \bigg]^{+},
\end{equation}
where $\eta$ is the waterfilling coefficient satisfying
\begin{equation*}\label{equ:WaterfillingCoefficient}
\int_0^{T/2} C(t) = \lambda_{\rm{i}}T/2.
\end{equation*}

\end{prop}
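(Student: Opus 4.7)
The plan is to decouple the joint optimization and then apply Lagrangian duality to the resulting pure power-allocation problem. First I would argue that for any fixed profile $\mathcal{P}(\cdot)$, the pointwise rule of Lemma \ref{lem:PThreshold_DelayConstrainedMDSInAWGN} is optimal for $\mathbf{P}_{1\text{-}\rm{A}}$. The mode $\Theta$ enters the program only through $C(t)$ in the integral rate constraint; since enlarging $C(t)$ at any $t$ can only relax feasibility without affecting the cost $\int\mathcal{P}(t)dt$, the optimal choice at every $t$ maximizes the effective channel gain $\Gamma(\mathcal{P}(t),t)$, which by \eqref{equ:equivalentChannelFading} is precisely the threshold rule claimed for $\Theta^{*}$.

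With the antenna rule fixed, $\Gamma$ is a known (piecewise) function of $\mathcal{P}$ at each $t$, and I would attack the remaining problem with the Lagrangian
\[
L = \int_{0}^{T/2}\mathcal{P}(t)\,dt - \eta\!\left(\int_{0}^{T/2}C(t)\,dt - \lambda_{\rm{i}}T/2\right),
\]
handling $0\le\mathcal{P}(t)\le\mathcal{P}_{\rm{max}}(t)$ by complementary slackness. Interior stationarity gives $1=\eta\,\partial C/\partial\mathcal{P}$; substituting $C=\log_{2}(1+\Gamma\mathcal{P})$ and applying the chain rule produces
\[
\eta\bigl(\Gamma+\mathcal{P}\,\partial\Gamma/\partial\mathcal{P}\bigr)=1+\Gamma\mathcal{P}
\]
after absorbing $\ln 2$ into $\eta$. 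Solving this linearly for $\mathcal{P}$ yields exactly \eqref{equ:OptimalWaterFillingForNDS}; the outer $[\,\cdot\,]^{+}$ enforces $\mathcal{P}\ge 0$ and the clipping by $\mathcal{P}_{\rm{max}}(t)$ enforces the upper box, giving \eqref{equ:NDSoptPower}. The multiplier $\eta$ is pinned down by requiring the (binding) rate constraint $\int_{0}^{T/2}C(t)dt=\lambda_{\rm{i}}T/2$.

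The main obstacle is the non-smoothness of $\Gamma(\cdot,t)$ at the switch point $\zeta_{\mathcal{P}}$, where the $\max$ in \eqref{equ:equivalentChannelFading} introduces a kink. I would resolve it piecewise. In the SIMO regime ($\mathcal{P}<\zeta_{\mathcal{P}}$) one has $\partial\Gamma/\partial\mathcal{P}=0$ and $\Gamma=\alpha_{2}$, so \eqref{equ:OptimalWaterFillingForNDS} collapses to classical water-filling $\eta-1/\alpha_{2}$. In the MIMO regime, $\Gamma$ is affine in $\mathcal{P}$ with slope $\tfrac{\alpha_{1}\alpha_{2}-\beta^{2}}{4}$, so plugging its closed form into the fixed-point equation yields a single linear equation in $\mathcal{P}_{\rm{W}}$ whose unique positive root is \eqref{equ:OptimalWaterFillingForNDS}. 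Continuity of $C$ at $\zeta_{\mathcal{P}}$ (guaranteed by Lemma \ref{lem:PThreshold_DelayConstrainedMDSInAWGN}) together with concavity of $C$ in $\mathcal{P}$ within each piece ensures that the piecewise stationary point minimizes the Lagrangian; choosing the piece consistent with Lemma \ref{lem:PThreshold_DelayConstrainedMDSInAWGN} (MIMO iff $\mathcal{P}_{\rm{W}}\ge\zeta_{\mathcal{P}}$) makes the solution self-consistent and delivers both formulas in the proposition.
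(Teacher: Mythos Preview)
Your approach is essentially the same as the paper's: form the Lagrangian for $\mathbf{P}_{1\text{-}\rm{A}}$, minimize pointwise by setting the derivative in $\mathcal{P}(t)$ to zero, solve the resulting linear equation for $\mathcal{P}$, and then clip by the box constraints while fixing $\eta$ from the rate equality. The paper's proof is actually briefer than yours---it does not separately justify the decoupling of $\Theta^{*}$ nor address the kink of $\Gamma$ at $\zeta_{\mathcal{P}}$---so your piecewise treatment and self-consistency check add rigor the paper leaves implicit, but the core argument is identical.
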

\begin{proof}
The proof is shown in Appendix \ref{Appen:PowerAllocationForNDS}.
\end{proof}

\subsection{Optimal PAWAS for Delay-sensitive Traffic}

{ For cases with only delay-sensitive traffic $(0,\lambda_{\rm{s}},\tau_{\rm{max}})$, we assume that when the service queue is not empty, the selected RAUs are power-on and transmit signal, Otherwise, RAUs are power-off. Let $p_{\rm{on}}$ be the power-on probability, we have $p_{\rm{on}}=\rho_{\rm{s}}$\cite{heyman1968optimal}. 
Then, the optimal PAWAS on minimizing the average transmit power in half period can be rewritten as
\begin{flalign*}\label{equ:DSOptimizeProblemInAWGN}
\mathbf{P}_{1\text{-}\rm{B}}:\,\,\min_{\Theta,\mathcal{P}(t)} \,\,\, & \frac{2}{T} \int_{0}^{\frac{T}{2}}p_{\rm{on}} \mathcal{P}(t) dt\notag\\
\textrm{s.t.}\,\,\,&\eqref{equ:QueDelay},\eqref{equ:UniversalCapacity},C(t)=\bar{L}\mu_s.\nonumber\\
\end{flalign*}
Note that the second constraint in \eqref{equ:MainOptimizeProblemConstraints} is ignored, because we assume that $\mathcal{P}_{\rm{max}}$ is sufficiently large. Also, the third constraint relevant to delay-insensitive traffic is omitted. To solve this optimization problem, we analyze the queuing delay of optimal PAWAS first. The result is shown as follows.}
\begin{lem}\label{lem:MaxDelayEfficiency}
In both SIMO and MIMO, $\tau^{*} = \tau_{m}$. $\tau^{*}$ is the optimal queuing delay of the solution to $\mathbf{P}_{1\text{-}\rm{B}}$.
\end{lem}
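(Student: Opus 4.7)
The plan is to reduce Lemma~\ref{lem:MaxDelayEfficiency} to a monotonicity statement in $\mu_{\rm{s}}$. The constraint $C(t)=\bar L\mu_{\rm{s}}$ forces the instantaneous capacity to be constant in $t$, so once Lemma~\ref{lem:PThreshold_DelayConstrainedMDSInAWGN} pins down the antenna mode the power $\mathcal{P}(t)$ is fully determined by $\mu_{\rm{s}}$ and $t$; the only real degree of freedom in $\mathbf{P}_{1\text{-}\rm{B}}$ is then $\mu_{\rm{s}}$, and the queuing constraint \eqref{equ:QueDelay} reads $\mu_{\rm{s}}\ge \lambda_{\rm{s}}+1/\tau_{\max}$. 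Showing $\tau^{*}=\tau_{\max}$ is therefore equivalent to showing that the objective is minimized at the smallest admissible $\mu_{\rm{s}}$.

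Using $p_{\rm{on}}=\rho_{\rm{s}}=\lambda_{\rm{s}}/\mu_{\rm{s}}$, I would rewrite the cost as
\[
\frac{2\lambda_{\rm{s}}}{T}\int_{0}^{T/2}\frac{\mathcal{P}(t)}{\mu_{\rm{s}}}\,dt.
\]
Since $\lambda_{\rm{s}}$ is a fixed input, it suffices to prove, pointwise in $t$, that $\mathcal{P}(t)/\mu_{\rm{s}}$ is non-decreasing in $\mu_{\rm{s}}$ on the feasible range. This reduces a dynamic optimization over $\mathcal{P}(\cdot)$ to an elementary property of the scalar rate-power curve at each $t$.

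The key analytic ingredient is the concavity of $C(\mathcal{P})$ in $\mathcal{P}$ together with $C(0)=0$. For SIMO this is immediate from \eqref{equ:SIMOCapacityLOS}, since $\log_2(1+\alpha_2\mathcal{P})$ is strictly concave. For MIMO, with $A=(\alpha_1\alpha_2-\beta^{2})/4$ and $B=(\alpha_1+\alpha_2)/2$, one differentiates \eqref{equ:MIMOCapacityLOS} twice and must check that $2A^{2}\mathcal{P}^{2}+2AB\mathcal{P}+B^{2}-2A\ge 0$ for all $\mathcal{P}\ge 0$. The main obstacle is verifying this sign, because the quadratic inside the logarithm makes the second derivative of $C_{\rm M}$ not obviously signed. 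The rescue is the identity $B^{2}-4A=(\alpha_1-\alpha_2)^{2}/4+\beta^{2}\ge 0$, which makes the discriminant of the above quadratic in $\mathcal{P}$ non-positive; combined with Cauchy--Schwarz ($A\ge 0$), this yields $C_{\rm M}''\le 0$, so $C_{\rm M}$ is indeed concave.

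With concavity of $C(\mathcal{P})$ and $C(0)=0$ in hand, the secant slope $C(\mathcal{P})/\mathcal{P}$ is non-increasing in $\mathcal{P}$, equivalently $\mathcal{P}/C$ is non-decreasing in $\mathcal{P}$; since $\mathcal{P}$ is monotone in $C=\bar L\mu_{\rm{s}}$, the quantity $\mathcal{P}/\mu_{\rm{s}}=\bar L\cdot\mathcal{P}/C$ is non-decreasing in $\mu_{\rm{s}}$. Because Lemma~\ref{lem:PThreshold_DelayConstrainedMDSInAWGN} selects at each $t$ the antenna mode that minimizes instantaneous power, the pointwise monotonicity is preserved under this mode selection. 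Integrating over $t\in[0,T/2]$, the objective of $\mathbf{P}_{1\text{-}\rm{B}}$ is non-decreasing in $\mu_{\rm{s}}$, so it is minimized at $\mu_{\rm{s}}^{*}=\lambda_{\rm{s}}+1/\tau_{\max}$. Substituting into \eqref{equ:QueDelay} gives $\tau^{*}=\tau_{\max}$ for both SIMO and MIMO.
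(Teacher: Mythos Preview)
Your argument is correct and follows the same overall reduction as the paper: both rewrite the objective as a constant times $\int_0^{T/2}\rho_{\rm s}\mathcal{P}(t)\,dt$ with $\rho_{\rm s}=\lambda_{\rm s}/\mu_{\rm s}$, and then show pointwise that $\rho_{\rm s}\mathcal{P}(t)=\lambda_{\rm s}\,\mathcal{P}(t)/\mu_{\rm s}$ is non-decreasing in the service rate, so the minimum is attained at $\mu_{\rm s}=\lambda_{\rm s}+1/\tau_{\max}$.

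The difference is in how the monotonicity is established. The paper differentiates $\mathcal{P}/\log_2(1+\alpha_2\mathcal{P})$ (and the MIMO analogue) directly and then invokes a \emph{high transmit power} approximation to conclude the derivative is positive; the MIMO case is only asserted ``similarly'' under the same high-SNR assumption. You instead prove that $C(\mathcal{P})$ is concave with $C(0)=0$ in \emph{each} mode, and use the standard secant-slope fact to get $\mathcal{P}/C$ non-decreasing exactly, with no high-SNR approximation. Your MIMO concavity check via the discriminant identity $B^{2}-4A=(\alpha_1-\alpha_2)^{2}/4+\beta^{2}\ge 0$ together with $A\ge 0$ (Cauchy--Schwarz) is clean and correct, and it fills in the step the paper leaves to the reader. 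Your additional remark that the minimum over modes preserves the monotonicity (since the minimum of non-decreasing functions is non-decreasing) is a nice bonus that the paper does not address. In short: same skeleton, but your route is sharper and removes the high-SNR caveat.
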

\begin{proof}
The proof is shown in Appendix \ref{Appen:MaxDelayEfficient}.
\end{proof}

With the results in Lemma \ref{lem:MaxDelayEfficiency} and \eqref{equ:QueDelay}, the optimal service rate is given by
\begin{equation}\label{equ:DemandedRateDelayConstrainedInAWGN}
C^*(t) = \bar{L}\mu_{\rm{s}} = \bar{L}\left(\lambda_{\rm{s}} + {1}/{\tau_{m}}\right).
\end{equation}

Since $\mathbf{P}_{1\text{-}\rm{B}}$ can be solved by minimizing $\mathcal{P}(t)$ point-wise, substitute \eqref{equ:DemandedRateDelayConstrainedInAWGN} into Lemma \ref{lem:PThreshold_DelayConstrainedMDSInAWGN} and \eqref{equ:UniversalCapacity}, we can derive the optimal PAWAS for delay-sensitive traffic as follows.

\begin{prop}\label{prop:DelayConstrainedMDSInAWGN}
For delay-sensitive traffic $(0,\lambda_{\rm{s}},\tau_{\rm{max}})$, the optimal transmit power $\mathcal{P}_{\rm{S}}^{*}(t)$ can be derived by substituting $C^*(t)$ into \eqref{equ:UniversalCapacity}, where $C^*(t)$ is determined by \eqref{equ:DemandedRateDelayConstrainedInAWGN}.

The optimal antenna selection scheme can be expressed as
\begin{equation*}
\begin{split}
\Theta^* =&\bigg\{
\begin{array}{lc}
0,  & C^*(t) \geq \zeta_c\\
1,  & C^*(t) < \zeta_c.\\
\end{array}
\end{split}
\end{equation*}
Practically, when the service queue is empty, we denote $\mathcal{P}_{\rm{S}}^*(t)=0$ and $\Theta^*=1$.
\end{prop}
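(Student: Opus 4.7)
The plan is to reduce $\mathbf{P}_{1\text{-}\rm{B}}$ to a pointwise minimization in $t$ and then apply the results already in hand. Because $p_{\rm{on}}=\rho_{\rm{s}}$ is a (nonnegative) constant determined by the ratio $\lambda_{\rm{s}}/\mu_{\rm{s}}$, and the capacity requirement $C(t)=\bar{L}\mu_{\rm{s}}$ is imposed at every instant with no coupling across time, the objective $\frac{2}{T}\int_0^{T/2} p_{\rm{on}}\mathcal{P}(t)\,dt$ is minimized by minimizing $\mathcal{P}(t)$ instant by instant, once $\mu_{\rm{s}}$ has been fixed.

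Next, I would fix $\mu_{\rm{s}}$ using Lemma~\ref{lem:MaxDelayEfficiency}. Since increasing $\mu_{\rm{s}}$ strictly increases the required $C(t)$ and hence the required $\mathcal{P}(t)$, the optimum occurs when the delay constraint \eqref{equ:QueDelay} is tight, giving $\tau^{*}=\tau_{\max}$ and therefore $\mu_{\rm{s}}^{*}=\lambda_{\rm{s}}+1/\tau_{\max}$. Substituting this into $C(t)=\bar{L}\mu_{\rm{s}}$ yields exactly the constant target rate $C^{*}(t)$ given in \eqref{equ:DemandedRateDelayConstrainedInAWGN}.

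With $C^{*}(t)$ fixed, the per-slot subproblem is to minimize $\mathcal{P}(t)$ subject to $C(t)=C^{*}(t)$ using either MIMO or SIMO. By the definition of the effective channel gain $\Gamma(\mathcal{P}(t),t)$ in \eqref{equ:equivalentChannelFading} and the universal form \eqref{equ:UniversalCapacity}, the instantaneous power needed to deliver a prescribed rate is strictly decreasing in $\Gamma$, so the scheme with the larger effective gain wins. Lemma~\ref{lem:PThreshold_DelayConstrainedMDSInAWGN} already converts this comparison into the capacity threshold $\zeta_c$: MIMO is optimal iff $C^{*}(t)\ge \zeta_c$, which is precisely the $\Theta^{*}$ rule stated in the proposition. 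Once the antenna scheme is chosen, inverting \eqref{equ:UniversalCapacity} with the selected $\Gamma$ gives the required $\mathcal{P}_{\rm{S}}^{*}(t)$. Finally, the empty-queue case contributes a zero measure of served time (equivalently it is absorbed into $p_{\rm{on}}$), so setting $\mathcal{P}_{\rm{S}}^{*}(t)=0$ incurs no cost, and the convention $\Theta^{*}=1$ is used because idle SIMO consumes nothing beyond a single RAU being on standby.

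The only genuinely delicate step is the pointwise decomposition: one must verify that the antenna selection $\Theta(t)$ and power $\mathcal{P}(t)$ at different times are not linked by any shared resource constraint. The arrival and delay constraints enter only through the scalar $\mu_{\rm{s}}$, and the maximum-power constraint is removed by assumption; the remaining problem is genuinely separable in $t$. Everything else is a clean invocation of Lemmas~\ref{lem:PThreshold_DelayConstrainedMDSInAWGN} and~\ref{lem:MaxDelayEfficiency}, which is why the authors' proof can be compressed into a substitution.
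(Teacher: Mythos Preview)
Your proposal is correct and follows essentially the same route as the paper: invoke Lemma~\ref{lem:MaxDelayEfficiency} to fix $\mu_{\rm{s}}$ (hence $C^{*}(t)$ via \eqref{equ:DemandedRateDelayConstrainedInAWGN}), observe that $\mathbf{P}_{1\text{-}\rm{B}}$ then decouples pointwise in $t$, and apply Lemma~\ref{lem:PThreshold_DelayConstrainedMDSInAWGN} together with \eqref{equ:UniversalCapacity} at each instant. One small caution: your heuristic ``increasing $\mu_{\rm{s}}$ increases $\mathcal{P}(t)$'' glosses over the fact that $p_{\rm{on}}=\lambda_{\rm{s}}/\mu_{\rm{s}}$ simultaneously \emph{decreases}, so the real monotonicity is of the product $\rho_{\rm{s}}\mathcal{P}(t)$ --- but that is precisely what Lemma~\ref{lem:MaxDelayEfficiency} establishes, so your invocation of it covers the point.
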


From Proposition \ref{prop:DelayConstrainedMDSInAWGN}, we can observe that when $C^*(t)$ is sufficiently large, $\Theta^*=0$ holds for $\forall t \in [0,T/2]$, which is summarized as follows.
\begin{cor}\label{cor:MaxZetaMuForMRC}
When $C^*(t) \geq \max\left(\zeta_c^{(1)} , \zeta_c^{(2)}\right)$ for $\forall t \in [0,T/2]$, $\Theta^*\equiv0$. $\zeta_c^{(1)}$ and $\zeta_c^{(2)}$ is the capacity threshold when $x=(d_h-d_{\rm{r}})/2$ and $x=d_h/2$, respectively.
\end{cor}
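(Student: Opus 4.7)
The plan is to reduce the claim to a deterministic statement about the capacity threshold $\zeta_c(x)$ as a function of position, and then show its maximum on the half-period interval is attained at one of the two named points. By Proposition~\ref{prop:DelayConstrainedMDSInAWGN}, $\Theta^*(t) = 0$ if and only if $C^*(t) \geq \zeta_c(t)$, and by Eqn.~\eqref{equ:DemandedRateDelayConstrainedInAWGN} the demanded capacity $C^*(t) = \bar{L}(\lambda_{\rm s} + 1/\tau_{m})$ is constant in $t$. Hence the statement ``$\Theta^* \equiv 0$ on $[0, T/2]$'' is equivalent to $C^* \geq \sup_{x \in [0,\, d_h/2]} \zeta_c(x)$, and the entire corollary reduces to
\[
\sup_{x \in [0,\, d_h/2]} \zeta_c(x) \;=\; \max\bigl(\zeta_c((d_h-d_{\rm r})/2),\; \zeta_c(d_h/2)\bigr).
\]

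First, I would dispose of the left endpoint. At $x = 0$ the MR pair is symmetric about the midpoint $O$ of the two RAUs, so by \eqref{equ:distancec_ij} we have $d_{11} = d_{22}$ and $d_{12} = d_{21}$, which via \eqref{equ:HHExpansionElements} and \eqref{equ:LOS_h_ij} forces $\alpha_1(0) = \alpha_2(0)$; plugging into \eqref{equ:MDSRThreshold} gives $\zeta_c(0) = \log_2 1 = 0$, ruling out $x = 0$ as a maximizer. The natural remaining candidates are $x_1 := (d_h - d_{\rm r})/2$, at which $d_{22}$ attains its global minimum $d_v$ and switches from decreasing to increasing, and the right endpoint $x_2 := d_h/2$, together with any interior critical point of $\zeta_c$ on $(0, d_h/2)$.

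I would then split $[0, d_h/2]$ at $x_1$ into $I_1 = [0, x_1]$ and $I_2 = [x_1, x_2]$ and analyze the argument $g(x) := 2\alpha_2(\alpha_2 - \alpha_1)/(\alpha_1 \alpha_2 - \beta^2)$ of the logarithm in \eqref{equ:MDSRThreshold} on each piece. On $I_1$, Eqn.~\eqref{equ:distancec_ij} shows $d_{12}, d_{22}$ are strictly decreasing in $x$ while $d_{11}, d_{21}$ are strictly increasing; with the path-loss convention of \eqref{equ:LOS_h_ij} this drives $\alpha_2$ monotonically up and $\alpha_1$ monotonically down, making the numerator $2\alpha_2(\alpha_2 - \alpha_1)$ strictly increasing. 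Positivity of the denominator $\alpha_1 \alpha_2 - \beta^2 \geq 0$ follows from Cauchy--Schwarz on the columns of $\mathbf{H}$, and differentiating via $\partial \beta/\partial x = \sum_{i}\bigl(h_{i1}\, \partial h_{i2}/\partial x + h_{i2}\, \partial h_{i1}/\partial x\bigr)$ lets one check $g' > 0$ on $I_1$, so $\max_{I_1} \zeta_c = \zeta_c(x_1)$.

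The main obstacle will be the analogous analysis on $I_2$, where $d_{22}$ reverses and begins increasing, so the four distances no longer move in concert and $g'$ loses an obvious sign. The plan for this step is to show that $g'$ changes sign at most once on $I_2$, so that $\zeta_c$ is unimodal on $I_2$ with maximum attained at one of the two endpoints $x_1$ or $x_2$. I expect this to go through by exploiting the polynomial-in-$d_{ij}^{\iota}$ structure of numerator and denominator, together with the reflection symmetry of the RAU--MR geometry about the vertical line through RAU$_2$, which supplies the identifications among the $d_{ij}$ needed to pin down the sign of $g'$ near the endpoints of $I_2$. Combining strict monotonicity on $I_1$ with unimodality on $I_2$ then gives $\sup_{[0, d_h/2]} \zeta_c = \max(\zeta_c(x_1), \zeta_c(x_2)) = \max(\zeta_c^{(1)}, \zeta_c^{(2)})$, completing the reduction and hence the corollary.
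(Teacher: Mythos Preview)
Your reduction to the claim $\sup_{x\in[0,d_h/2]}\zeta_c(x)=\max\bigl(\zeta_c^{(1)},\zeta_c^{(2)}\bigr)$ is exactly right, and your piecewise monotonicity/unimodality plan is a sound strategy that is in fact more careful than what the paper does. The paper's own proof simply records
\[
\frac{\partial\zeta_c}{\partial x}\Big|_{x=(d_h-d_{\rm r})/2}=0,\qquad \lim_{x\to(d_h/2)^-}\frac{\partial\zeta_c}{\partial x}>0,
\]
declares these two points the local maxima, and concludes via Proposition~\ref{prop:DelayConstrainedMDSInAWGN}; it does not explicitly exclude other interior critical points or check that the stationary point at $x_1$ is a maximum rather than a saddle or minimum. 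Your approach---symmetry to kill $x=0$, monotone distances on $[0,x_1]$ to force $g'>0$ there, then unimodality on $[x_1,x_2]$---is precisely what is needed to close that gap.

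The trade-off is that the paper's two derivative evaluations are short mechanical computations, whereas your unimodality step on $I_2$ is real work that you have not yet carried out. Once $d_{22}$ reverses direction the sign of $g'$ is no longer transparent, and the ``at most one sign change'' claim will need either an explicit zero count for the numerator of $g'$ (a polynomial in the $d_{ij}^{\iota}$) or a second-derivative/convexity argument; the reflection symmetry you mention about the line through $\mathrm{RAU}_2$ pins down behavior near $x_2$ but does not by itself bound the number of interior oscillations. If you can complete that step, your proof will be strictly more complete than the paper's; if not, you may fall back on the paper's route and simply verify the two derivative facts above, accepting the same implicit assumption the authors make.
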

\begin{proof}
Substitute \eqref{equ:HHExpansionElements} and \eqref{equ:LOS_h_ij} into \eqref{equ:MDSRThreshold}, we have
\begin{equation*}
\frac{\partial{\zeta_c}}{\partial x}\bigg|_{x = (d_h-d_{\rm{r}})/2} = 0
\end{equation*}
and
\begin{equation*}
\lim_{x\to {(d_h/2)}^{-}} \frac{\partial{\zeta_c}}{\partial x} > 0,
\end{equation*}
where $x=vt$. That is, the capacity threshold $\zeta_c$ achieves its local maximum at $x=(d_h-d_{\rm{r}})/2$ and $x=d_h/2$, and the global maximum is $\max\left(\zeta_c^{(1)} , \zeta_c^{(2)}\right)$. According to Proposition \ref{prop:DelayConstrainedMDSInAWGN}, $\Theta^*\equiv0$ for $\forall t \in [0,T/2]$ when $C^*(t) \geq \max\left(\zeta_c^{(1)} , \zeta_c^{(2)}\right)$.
\end{proof}


\begin{figure}[htbp]
\centering
\subfigure[]{ \label{Fig:HybridSamplePower}
\includegraphics[width=0.3\columnwidth]{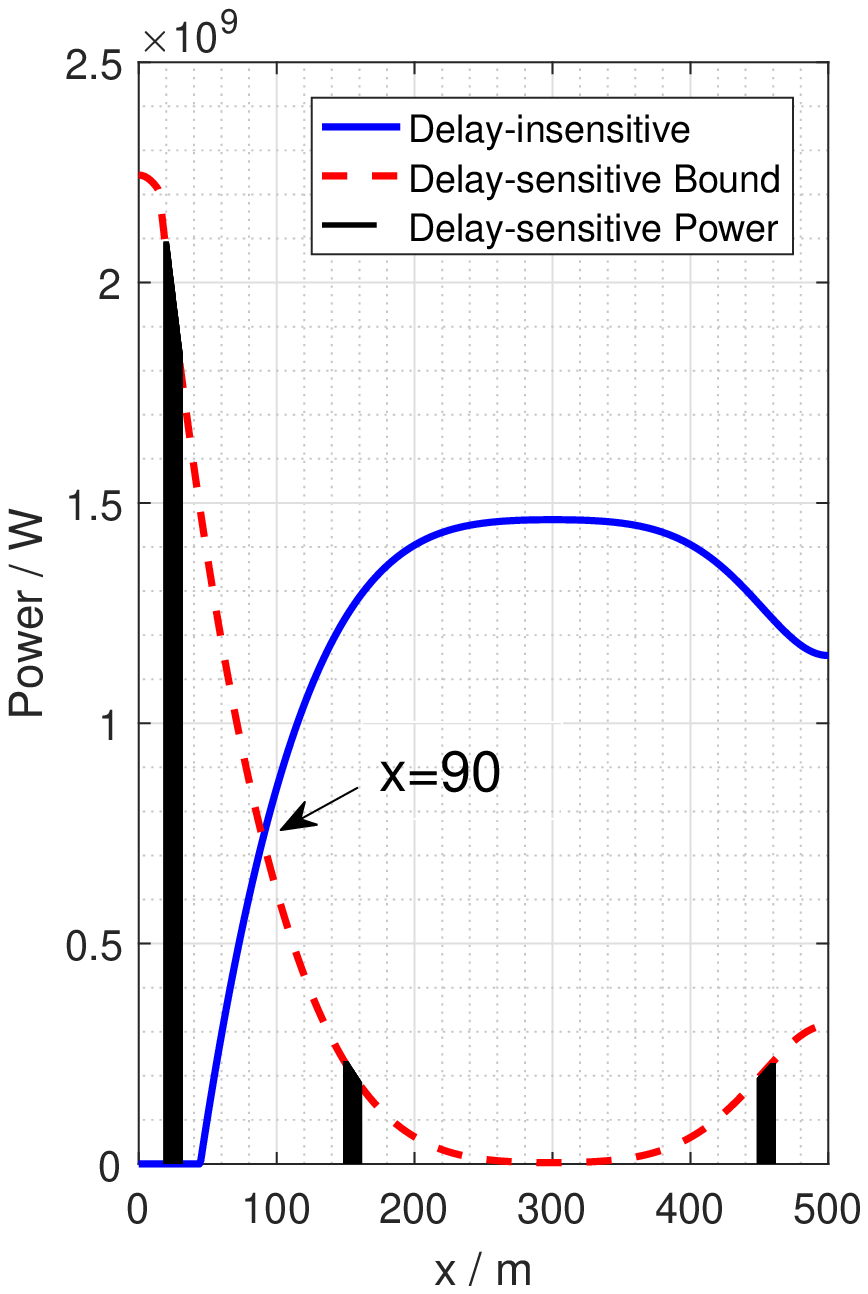}}
\subfigure[]{ \label{Fig:HybridSampleCapacity}
\includegraphics[width=0.3\columnwidth]{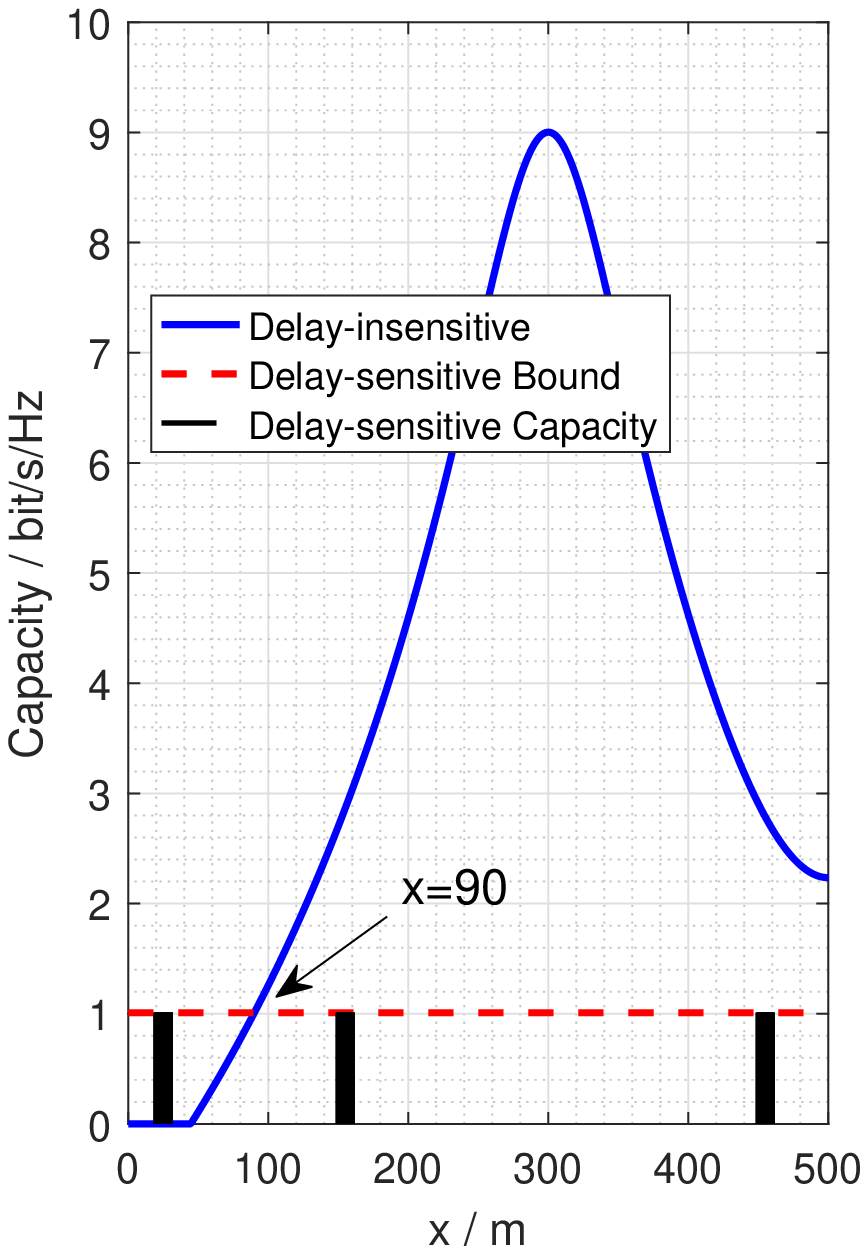}}
\caption{The power allocations and capacities for delay-insensitive and delay-sensitive traffic, where $\lambda_{\rm{i}}=40~\rm{s}^{-1}$, $\lambda_{\rm{s}}=1~\rm{s}^{-1}$ and $\tau_{\rm{max}} = 110~\rm{ms}$. In simulations, $d_{\rm{r}}$=400 m, $d_h$=1000 m, $d_v$=100 m, $\iota$=3.8 and $\bar{L}=0.1$ bit/Hz.} \label{Fig:HybridSample} \label{Fig:HybridSample}
\end{figure}

\subsection{Optimal PAWAS for Hybrid Traffic}

{ To derive the optimal PAWAS for hybrid traffic $(\lambda_{\rm{i}},\lambda_{\rm{s}},\tau_{\rm{max}})$, we take $(40,1,110)$ for example.} As shown in Fig. \ref{Fig:HybridSample}, the optimal power allocation for delay-sensitive and -insensitive traffic are depicted, separately. The solid lines in Fig. \ref{Fig:HybridSamplePower} and Fig. \ref{Fig:HybridSampleCapacity} denote the allocated power and capacity for the delay-insensitive part $(40,0,0)$, respectively. Since the service process to the delay-sensitive part $(0,1,110)$ is probabilistic, the scattered vertical black bars denote the relevant stochastically allocated power and capacity, and the dashed lines are corresponding power and capacity bounds.

We extend the theoretical results of delay-insensitive traffic to hybrid traffic. Let $C(t)$ be the channel capacity relevant to the delay-insensitive part $(\lambda_{\rm{i}},0,0)$. As shown in Fig. \ref{Fig:HybridSampleCapacity}, for hybrid traffic $(40,1,110)$, servicing delay-insensitive traffic occupies $C(t)$ in the service capacity of delay-insensitive traffic in $t\in[0,90/v]$, whereas in $t\in[0,90/v]$, it occupies $\bar{L}\mu_{\rm{s}}$. Denote the cumulative occupied capacity by delay-sensitive traffic in half period as $S_{\rm{op}}$. For general hybrid traffic cases, since the busy probability corresponding to delay-sensitive traffic is $p_{\rm{on}}$, the expectation of $S_{\rm{op}}$ is
\begin{equation*}
\mathbb{E}(S_{\rm{op}}) = \int_{0}^{T/2} p_{\rm{on}}\left(\min\left(\bar{L}\mu_{\rm{s}} , C(t)\right)\right) dt.
\end{equation*}

To make up the commutative capacity occupied by delay-sensitive traffic, the waterfilling coefficient for hybrid traffic is supposed to be
\begin{equation}\label{equ:WaterFillingCoefficientForHybrid}
\begin{split}
&\hat{\eta} = \argmine_{\hat{\eta}} \Big\{ \int_0^{T/2} C(t)dt - \mathbb{E}(S_{\rm{op}}) = \bar{L}\lambda_{\rm{i}} T/2 \Big\}\\
&=\argmine_{\hat{\eta}} \Big\{ \int_0^{T/2} \Big(C(t) - p_{\rm{on}}\min\big(\bar{L}\mu_{\rm{s}} , C(t)\big)\Big)dt = \bar{L} \lambda_{\rm{i}} T/2 \Big\}.
\end{split}
\end{equation}

Note that the waterfilling coefficient $\hat{\eta}$ shown in \eqref{equ:WaterFillingCoefficientForHybrid} is derived by the expectation of $S_{\rm{op}}$. This is because the arrival of delay-sensitive traffic is stochastic. With this mode, the delay-insensitive traffic can be serviced in one period by probability, and infinite queuing delay is avoided. The optimal PAWAS for hybrid traffic can be summarized as follows.
\begin{prop}\label{prop:HybridOptimalPowerAndMDS}
For hybrid traffic $(\lambda_{\rm{i}},\lambda_{\rm{s}},\tau_{\rm{max}})$, the optimal power allocation is
\begin{equation}\label{equ:HybridInstantaneousPower}
\mathcal{P}_{\rm{H}}^*(t) = \max(\hat{\mathcal{P}}_{\rm{I}}^*(t) , \mathcal{P}_{\rm{S}}^*(t)),
\end{equation}
where $\hat{\mathcal{P}}_{\rm{I}}^*(t)$ is determined by applying \eqref{equ:WaterFillingCoefficientForHybrid} to \eqref{equ:NDSoptPower}, and $\mathcal{P}_{\rm{S}}^*(t)$ is the optimal allocated power for $(0,\lambda_{\rm{s}},\tau_{\rm{max}})$.

The optimal antenna selection scheme is
\begin{equation*}
\begin{split}
\Theta^* =&\bigg\{
\begin{array}{lc}
0,  & \mathcal{P}_{\rm{H}}^*(t) \geq \zeta_{\mathcal{P}}\\
1,  & \mathcal{P}_{\rm{H}}^*(t) < \zeta_{\mathcal{P}}.\\
\end{array}
\end{split}
\end{equation*}
\end{prop}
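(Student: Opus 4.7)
The plan is to view hybrid traffic as the delay-sensitive requirement of Proposition \ref{prop:DelayConstrainedMDSInAWGN} superimposed on the delay-insensitive optimization of Proposition \ref{prop:NonDelaySensitiveOptimalPowerAndMDS}, and to show that the two are coupled only through a single additive adjustment of the waterfilling coefficient. More precisely, by Lemma \ref{lem:PThreshold_DelayConstrainedMDSInAWGN} the antenna selection is completely determined point-wise by the chosen transmit power, so the outer problem reduces to finding $\mathcal{P}(t)$. I would first argue, by the same reasoning that gave Lemma \ref{lem:MaxDelayEfficiency}, that any optimal policy must saturate $\tau^{*}=\tau_{\rm{max}}$ on the delay-sensitive stream, and so whenever the M/M/1 server is busy, the instantaneous capacity must be at least $\bar{L}\mu_{\rm{s}}=\bar{L}(\lambda_{\rm{s}}+1/\tau_{\rm{max}})$. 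Through \eqref{equ:UniversalCapacity} this induces the pointwise hard lower bound $\mathcal{P}(t)\geq \mathcal{P}_{\rm{S}}^{*}(t)$ on busy slots, where $\mathcal{P}_{\rm{S}}^{*}(t)$ is exactly the quantity produced by Proposition \ref{prop:DelayConstrainedMDSInAWGN}.

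Next I would form the Lagrangian for $\mathbf{P}_{1}$ with only the integral constraint \eqref{equ:MainOptimizeProblemConstraints} active (the per-slot $\mathcal{P}_{\rm{max}}(t)$ bound being assumed slack by hypothesis), but with the additional pointwise inequality $\mathcal{P}(t)\geq \mathcal{P}_{\rm{S}}^{*}(t)$ enforced on busy slots by a second KKT multiplier. Because $\mathcal{P}(t)$ is strictly convex in the demanded rate for fixed $\Theta$ (and Lemma \ref{lem:PThreshold_DelayConstrainedMDSInAWGN} guarantees we are always on the lower envelope across $\Theta$), the stationarity condition for the unconstrained (delay-insensitive) part is exactly the generalized waterfilling expression \eqref{equ:OptimalWaterFillingForNDS}, which I will denote $\hat{\mathcal{P}}_{\rm{I}}^{*}(t)$ once the water level $\hat\eta$ is substituted. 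Complementary slackness then forces, on each busy slot, either $\mathcal{P}_{\rm{H}}^{*}(t)=\hat{\mathcal{P}}_{\rm{I}}^{*}(t)$ when $\hat{\mathcal{P}}_{\rm{I}}^{*}(t)\geq \mathcal{P}_{\rm{S}}^{*}(t)$ (the hard bound is slack and the surplus capacity above $\bar{L}\mu_{\rm{s}}$ is devoted to the delay-insensitive queue), or $\mathcal{P}_{\rm{H}}^{*}(t)=\mathcal{P}_{\rm{S}}^{*}(t)$ otherwise (the hard bound is tight and the whole capacity is consumed by delay-sensitive service, leaving nothing for the delay-insensitive stream in that slot). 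Combining these two cases gives the claimed $\max$ form \eqref{equ:HybridInstantaneousPower}.

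To close the argument I would calibrate $\hat\eta$ by requiring that the cumulative capacity actually available to the delay-insensitive queue matches its arrival demand in expectation. During busy slots, which occur with probability $p_{\rm{on}}$, the capacity consumed by delay-sensitive service is $\min(\bar{L}\mu_{\rm{s}},C(t))$ (it equals $\bar{L}\mu_{\rm{s}}$ where $\hat{\mathcal{P}}_{\rm{I}}^{*}(t)$ dominates, and $C(t)$ where $\mathcal{P}_{\rm{S}}^{*}(t)$ dominates so that no surplus remains); during idle slots it is zero. Taking expectations reproduces exactly \eqref{equ:WaterFillingCoefficientForHybrid}, so the $\hat\eta$ in Proposition \ref{prop:HybridOptimalPowerAndMDS} is precisely the unique multiplier that makes the delay-insensitive integral constraint bind. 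Finally, the antenna selection rule follows directly by substituting $\mathcal{P}_{\rm{H}}^{*}(t)$ into Lemma \ref{lem:PThreshold_DelayConstrainedMDSInAWGN}.

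The main obstacle I anticipate is justifying the decoupling that the $\max$ form encodes: namely, that on slots where $\mathcal{P}_{\rm{S}}^{*}(t)$ dominates it is truly suboptimal to push extra power above $\mathcal{P}_{\rm{S}}^{*}(t)$ just to pre-serve future delay-insensitive arrivals. This requires the convexity of $\mathcal{P}(t)$ in the served rate (and hence Jensen's inequality across $t$) combined with the fact that the effective channel gain $\Gamma(\mathcal{P}(t),t)$ in \eqref{equ:equivalentChannelFading} is non-decreasing in $\mathcal{P}(t)$, so the marginal power cost of delivering an extra nat is minimized precisely on the slots selected by the waterfilling level $\hat\eta$, and it is never advantageous to shift effort onto the $\mathcal{P}_{\rm{S}}^{*}$-dominated slots. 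Handling this rigorously inside the stochastic busy/idle split, together with the mildly nonstandard dependence of $\Gamma$ on $\mathcal{P}(t)$ itself, is the step I would take the most care over.
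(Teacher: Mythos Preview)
Your proposal is essentially correct and, in fact, considerably more rigorous than what the paper does. The paper does not give a formal proof of this proposition at all: it motivates the result by an example, observes that servicing the delay-sensitive stream ``occupies'' a capacity $\min(\bar L\mu_{\rm s},C(t))$ with probability $p_{\rm on}$, computes $\mathbb{E}(S_{\rm op})$, and then simply \emph{defines} the adjusted water level $\hat\eta$ via \eqref{equ:WaterFillingCoefficientForHybrid} so that the residual capacity matches the delay-insensitive demand in expectation. The proposition is then stated as a summary of this heuristic decomposition, with no Lagrangian, no KKT conditions, and no explicit optimality argument.

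Your route---forming the Lagrangian for $\mathbf{P}_1$ with a pointwise hard floor $\mathcal P(t)\geq \mathcal P_{\rm S}^*(t)$ on busy slots, invoking complementary slackness to obtain the $\max$ structure, and then calibrating $\hat\eta$ by the same expected-capacity balance---arrives at the same formula but actually justifies why the $\max$ form is optimal rather than merely plausible. The convexity/monotonicity concern you flag at the end (that it is never advantageous to over-serve on $\mathcal P_{\rm S}^*$-dominated slots) is precisely the gap the paper leaves implicit; your argument via the marginal-cost interpretation of the water level is the right way to close it, and is more than the paper itself supplies.
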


For some specific cases with high delay-insensitive traffic demand, we have following corollary.

\begin{cor}\label{cor:TaumNotCare}
Hybrid traffics with same $(\lambda_{\rm{i}} + \lambda_{\rm{s}})$ have identical $\hat{\mathcal{P}}^*_{\rm{I}}(t)$ and $C(t)$ for arbitrary $\tau_{\rm{max}}$, when $C(t) \geq \bar{L}\big(\lambda_{\rm{s}}+{1}/{\tau_{\rm{max}}}\big)$ holds in $\forall t\in[0,T/2]$. In this case, $\hat{\eta}$ is determined by
\begin{equation}\label{equ:SpecificHybridCt}
\hat{\eta} = \argmine_{\hat{\eta}} \Big\{\int_0^{T/2} C(t) = \bar{L} (\lambda_{\rm{i}} + \lambda_{\rm{s}} )T/2\Big\}.
\end{equation}
\end{cor}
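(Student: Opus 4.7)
My plan is to unpack the defining equation \eqref{equ:WaterFillingCoefficientForHybrid} for $\hat{\eta}$ under the stated hypothesis and show that $\tau_{\rm max}$ simply drops out, collapsing the condition to \eqref{equ:SpecificHybridCt}, from which the invariance of $\hat{\mathcal{P}}^*_{\rm I}(t)$ and $C(t)$ then follows mechanically.

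First, I would use the hypothesis $C(t)\ge\bar{L}\mu_{\rm s}=\bar{L}\bigl(\lambda_{\rm s}+1/\tau_{\rm max}\bigr)$ that holds for every $t\in[0,T/2]$ to resolve the inner minimum in \eqref{equ:WaterFillingCoefficientForHybrid}: the clause $\min(\bar{L}\mu_{\rm s},C(t))$ always equals $\bar{L}\mu_{\rm s}$. The integral equation for $\hat{\eta}$ therefore simplifies to
\begin{equation*}
\int_{0}^{T/2}\!\!C(t)\,dt \;-\; p_{\rm on}\,\bar{L}\mu_{\rm s}\,\tfrac{T}{2} \;=\; \bar{L}\lambda_{\rm i}\,\tfrac{T}{2}.
\end{equation*}

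Second, I would invoke the identity $p_{\rm on}=\rho_{\rm s}=\lambda_{\rm s}/\mu_{\rm s}$ established just before Proposition \ref{prop:DelayConstrainedMDSInAWGN}, so that $p_{\rm on}\bar{L}\mu_{\rm s}=\bar{L}\lambda_{\rm s}$. This product is manifestly independent of $\tau_{\rm max}$, and rearranging gives exactly \eqref{equ:SpecificHybridCt}. Thus the determining condition for $\hat{\eta}$ involves $\lambda_{\rm i}$ and $\lambda_{\rm s}$ only through their sum.

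Third, I would observe that once $\hat{\eta}$ depends only on $\lambda_{\rm i}+\lambda_{\rm s}$, the generalized waterfilling expression \eqref{equ:OptimalWaterFillingForNDS} yields a $\hat{\mathcal{P}}^*_{\rm I}(t)$ that likewise depends only on $\lambda_{\rm i}+\lambda_{\rm s}$ (the effective channel gain $\Gamma$ is a function of geometry and fading, not of $\tau_{\rm max}$), and substituting into \eqref{equ:UniversalCapacity} gives the same $C(t)$ for all admissible $\tau_{\rm max}$. The main (minor) obstacle is just making precise that the hypothesis is preserved across the family of traffics being compared: since $\bar{L}\mu_{\rm s}$ decreases as $\tau_{\rm max}$ grows while $C(t)$ is unchanged, if the inequality holds for some $\tau_{\rm max}$ it continues to hold for all larger ones, which I would note in a short remark to justify that the corollary describes a nonempty regime rather than a vacuous one.
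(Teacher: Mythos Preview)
Your proposal is correct and follows essentially the same route as the paper's proof: substitute the hypothesis into \eqref{equ:WaterFillingCoefficientForHybrid} to collapse the $\min$ and then use $p_{\rm on}=\lambda_{\rm s}/\mu_{\rm s}$ to eliminate $\mu_{\rm s}$, yielding \eqref{equ:SpecificHybridCt}. You have merely spelled out the cancellation $p_{\rm on}\bar{L}\mu_{\rm s}=\bar{L}\lambda_{\rm s}$ that the paper leaves implicit, and your closing remark on nonvacuity is a harmless addition.
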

\begin{proof}
Substitute $C(t) \geq \bar{L}\mu_{\rm{s}} = \bar{L}\left(\lambda_{\rm{s}}+{1}/{\tau_{\rm{max}}}\right)$ into \eqref{equ:WaterFillingCoefficientForHybrid}, \eqref{equ:SpecificHybridCt} can be easily derived. Then, traffics with the same value of $(\lambda_{\rm{i}} + \lambda_{\rm{s}})$ have the same value of $\hat{\eta}$, resulting in identical $\hat{\mathcal{P}}_{I}(t)$ and $C(t)$.
\end{proof}


Based on previous theoretical results, the optimal PAWAS for hybrid traffic can be solved by Algorithm \ref{alg:PowerAllocationAndMDSInAWGN}. Specifically, when $\lambda_{\rm{s}}=0~s^{-1}$, hybrid traffic degrades to delay-insensitive traffic, and Algorithm \ref{alg:PowerAllocationAndMDSInAWGN} is the solution to Proposition \ref{prop:NonDelaySensitiveOptimalPowerAndMDS}. Similarly, when $\lambda_{\rm{i}} = 0~\rm{s}^{-1}$, it degrades to delay-sensitive traffic, and Algorithm \ref{alg:PowerAllocationAndMDSInAWGN} solves Proposition \ref{prop:DelayConstrainedMDSInAWGN}.

\begin{algorithm}[htbp]
    \caption{Optimal PAWAS}\label{alg:PowerAllocationAndMDSInAWGN}
    \begin{algorithmic}[1]

    	\item \textbf{Initialize} ~~\\          
        Physical framework parameters: $d_{\rm{r}}$, $d_{\rm{h}}$, $d_{\rm{v}}$ and $v$. Period $T = {d_{\rm{h}}}/{v}$;\\
        Input queuing parameters: $\lambda_{\rm{i}}$, $\lambda_{\rm{s}}$, $\tau_{\rm{m}}$;\\
        Initialize iteration parameters: $\hat{\eta}=\hat{\eta}_{\rm{min}}=0$, $\hat{\eta}_{\rm{max}}=1$ and $C(t)=0$ for $\forall t \in [0,T/2]$;\\
        Set the precision $\epsilon=10^{-3}$.\\

        \WHILE {$\int_0^{T/2} \Big(C(t) - p_{\rm{on}}\min\big(\bar{L}\mu_{\rm{s}} , C(t)\big)\Big)dt \leq \frac{\bar{L}\lambda_{\rm{i}} T}{2}$}\label{code:FindMaxEta}
        	\STATE $\hat{\eta}_{\rm{max}} = 10\hat{\eta}_{\rm{max}}$;\\
            \STATE $\hat{\eta}=\hat{\eta}_{\rm{max}}$;\\
            \STATE $\mathcal{P}_{\rm{H}}(t)\leftarrow$ Substitute $\hat{\eta}_{\rm{max}}$ into \eqref{equ:OptimalWaterFillingForNDS} and \eqref{equ:HybridInstantaneousPower};\label{line:3}
            \STATE $C(t) \leftarrow \log_2\left( 1 + \Gamma(\mathcal{P}_{\rm{H}}(t),t) \mathcal{P}_{\rm{H}}(t) \right)$.\label{line:1}
        \ENDWHILE \\

        \WHILE {$\int_0^{T/2} \Big(C(t) - p_{\rm{on}}\max\big(\bar{L}\mu_{\rm{s}} , C(t)\big)\Big)dt - \frac{\bar{L}\lambda_{\rm{i}} T}{2} \geq \epsilon$}\label{code:FindAccurateEta}
        	\STATE $\hat{\eta} = (\hat{\eta}_{\rm{min}} + \hat{\eta}_{\rm{max}})/2$;\\
            \STATE $\mathcal{P}_{\rm{H}}(t)\leftarrow$ Substitute $\hat{\eta}$ into \eqref{equ:OptimalWaterFillingForNDS} and \eqref{equ:HybridInstantaneousPower};\\
            \IF  {$\mathcal{P}_{\rm{H}}(t) \geq \zeta_{\mathcal{P}}$}\label{line:4}
				    \STATE $\Theta=0$;
			    \ELSE
			        \STATE $\Theta=1$;
			    \ENDIF
            \STATE $C(t) \leftarrow \log_2\left( 1 + \Gamma(\mathcal{P}_{\rm{H}}(t),t) \mathcal{P}_{\rm{H}}(t) \right)$.\label{line:2}
            \IF  {$\int_0^{T/2} \Big(C(t) - p_{\rm{on}}\min\big(\bar{L}\mu_{\rm{s}} , C(t)\big)\Big)dt \geq \frac{\bar{L}\lambda_{\rm{i}} T}{2}$}
            \STATE $\hat{\eta}_{\rm{max}} = \hat{\eta}$;
            \ELSE
            \STATE $\hat{\eta}_{\rm{min}} = \hat{\eta}$;
            \ENDIF
        \ENDWHILE \\

       	
       \textbf{Output} ~Optimal $\mathcal{P}^*_{\rm{H}}(t)$ and $\Theta^*$.\\          
    \end{algorithmic}
\end{algorithm}

\section{Optimal PAWAS in Rich Scattering Scenarios}\label{Sec:PAWASInHybridNakagami}

{ This section analyzes the optimal PAWAS with severe small-scale fading under rich scattering scenarios. Due to the high velocity of train, HSR channel experiences fast fading, and tracking of instantaneous small-scale fading would be impossible. According to \cite{zhang2015optimal}, the statistics of small-scale fading in one area can usually be determined in advance and hold for a long time, which can be well modeled by Nakagami-$m$ distribution\cite{li2013channel,dong2012varepsilon,liu2014novel}. Specifically, previously discussed sparse scattering scenarios are the special case of $m\rightarrow\infty$. In \cite{liu2014novel,chen2016smart}, we proposed a $m$ factor estimator suitable to fast time-varying HSR channels and implemented the real-time estimation by hardware. The estimated $m$ can be used to design the optimal PAWAS for rich scattering scenarios.

In addition, high velocity of train leads to small coherence time. For instance, for high-speed trains currently tested in China, the velocity is 486 km/h and the coherence time would be approximately 0.26 ms\cite{liu2012position}, which is small compared with our considered $\tau_{\rm{max}}$ and $T/2$.} Then, the instantaneous channel capacity of MIMO and SIMO, defined as $C_{\rm{M}}(m,t)$ and $C_{\rm{S}}(m,t)$, is ergodic within $\tau_{\rm{max}}$ or $T/2$. Let $\mathbb{C}_{\rm{M}}(m,t)$ and $\mathbb{C}_{\rm{S}}(m,t)$ be the corresponding ergodic capacity, we have
\begin{equation*}\label{equ:ServiceProvidedWithSmallScaleFaing}
\begin{cases}
\int_{0}^{\kappa} C_{\rm{M}}(m,t) dt = \mathbb{C}_{\rm{M}}(m,t) \kappa\\
\int_{0}^{\kappa} C_{\rm{S}}(m,t) dt = \mathbb{C}_{\rm{S}}(m,t) \kappa,
\end{cases}
\end{equation*}
where $\kappa$ is $\tau_{\rm{max}}$ or $T/2$. That is, the commutative channel capacity under Nakagami-$m$ fading can be characterized by corresponding ergodic capacity, which is determined as follows.

\begin{thm}\label{thm:HighSNRNakagamiCapacity}
The ergodic capacity of MIMO and SIMO can be expressed as
\begin{equation}\label{equ:MIMOCapacityWithSmallScaleFaing}
\mathbb{C}_{\rm{M}}(m,t) = C_{\rm{M}}(t) + \frac{2\big(\psi(m)-\ln(m)\big)}{\ln2}
\end{equation}
and
\begin{equation}\label{equ:MRCCapacityWithSmallScaleFaing}
\mathbb{C}_{\rm{S}}(m,t) = C_{\rm{S}}(t) + \frac{\psi(m)-\ln(m)}{\ln2},
\end{equation}
respectively. $\psi(m)$ is the digamma function\cite[(8.365.4)]{gradshteyn1994table}.
\end{thm}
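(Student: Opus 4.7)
The plan is to derive both identities as high-SNR ergodic-capacity asymptotics, driven by a single fact: the logarithmic moment of a Gamma variate. First I would fix the Nakagami-$m$ channel model implicit in this section by writing each $h_{ij}$ as the deterministic large-scale pathloss of \eqref{equ:LOS_h_ij} multiplied by an independent unit-mean small-scale coefficient whose squared magnitude $X_{ij}$ is Gamma-distributed with shape $m$ and scale $1/m$. The only analytic tool I would then need is the standard Gamma log-moment
\begin{equation*}
\mathbb{E}[\ln X_{ij}] \;=\; \psi(m)-\ln m,
\end{equation*}
whose right-hand side is exactly the quantity in the theorem and vanishes as $m\to\infty$, consistent with the sparse-scattering limit of Section~\ref{Sec:PAWASInDSAWGN}.

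For the SIMO branch I would start from \eqref{equ:SIMOCapacityLOS}. Inserting the small-scale fading turns the instantaneous capacity into $\log_2(1+\mathcal{P}(t)\alpha_2 Y)$, where $Y$ is the unit-mean small-scale composite gain after MRC. In the high-SNR regime natural to this paper, $\log_2(1+\mathcal{P}(t)\alpha_2 Y)\approx \log_2(\mathcal{P}(t)\alpha_2)+\log_2 Y$; taking expectations, the deterministic piece reproduces $C_{\rm S}(t)$ exactly, while $\mathbb{E}[\log_2 Y]=(\psi(m)-\ln m)/\ln 2$ by the log-moment identity, yielding \eqref{equ:MRCCapacityWithSmallScaleFaing}.

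For the MIMO branch I would work from the determinant expression \eqref{equ:MIMOCapacityExpressionOriginal}. At high SNR, $\log_2\det(\mathbf{I}+\tfrac{\mathcal{P}(t)}{2}\mathbf{H}^{\dagger}\mathbf{H})\approx 2\log_2(\mathcal{P}(t)/2)+\log_2\det(\mathbf{H}^{\dagger}\mathbf{H})$, and by \eqref{equ:HHExpansionElements} we have $\det(\mathbf{H}^{\dagger}\mathbf{H})=\alpha_1\alpha_2-\beta^2$. The deterministic part reconstructs $C_{\rm M}(t)$. For the random part I would use a QR-type factorization of the small-scale fading matrix to exhibit $\det(\mathbf{H}^{\dagger}\mathbf{H})$ as a product of the deterministic pathloss factors and two independent Gamma variates $Z_1,Z_2$ of shape $m$; the log-moment identity applied twice then contributes $2(\psi(m)-\ln m)/\ln 2$, giving the doubled correction in \eqref{equ:MIMOCapacityWithSmallScaleFaing}.

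The main obstacle I expect is this last structural step. The cross-term $\beta=h_{11}h_{12}+h_{21}h_{22}$ makes $\alpha_1\alpha_2-\beta^2$ a Gram-type combination of dependent Nakagami entries, so the multiplicative decomposition $\det(\mathbf{H}^{\dagger}\mathbf{H})\propto Z_1 Z_2$ into independent Gamma factors must be justified carefully, either via the QR/Bartlett triangularization of the random matrix or by directly identifying the two relevant latent fading degrees of freedom. Once that factorization is in hand, the doubled loss in \eqref{equ:MIMOCapacityWithSmallScaleFaing} relative to \eqref{equ:MRCCapacityWithSmallScaleFaing} follows by counting the two independent small-scale degrees of freedom in the $2\times 2$ MIMO channel against the single one in SIMO, matching the factor-of-two observation announced in the introduction.
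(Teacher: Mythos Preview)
Your plan shares with the paper the two essential ingredients---the high-SNR reduction and the Gamma log-moment $\mathbb{E}[\ln X]=\psi(m)-\ln m$---but the MIMO step is handled by a genuinely different device. You propose to attack $\log_2\det(\mathbf{H}^\dagger\mathbf{H})$ directly via a QR/Bartlett factorization into two independent Gamma factors. The paper instead proves a Schur-concavity lemma (for Hermitian $\mathbf{Q}$, $\sum_i\log_2(1+\lambda_i)\le\sum_i\log_2(1+d_{ii})$) and uses it to replace $\log_2\det(\mathbf{I}+\tfrac{\mathcal{P}}{2}\mathbf{H}^\dagger\mathbf{H})$ by the sum over the two \emph{diagonal} entries of $\tfrac{\mathcal{P}}{2}\mathbf{H}^\dagger\mathbf{H}$, i.e.\ the column norms $h_{1j}^2n_{1j}^2+h_{2j}^2n_{2j}^2$; this gives only an upper bound, after which each normalized column norm $(h_{1j}^2n_{1j}^2+h_{2j}^2n_{2j}^2)/\alpha_j$ is treated as a $\gamma(m,1)$ variate to harvest the correction $(\psi(m)-\ln m)/\ln 2$ twice. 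Your route is conceptually cleaner because it aims at an equality for the determinant itself, but the obstacle you already flag is real: Bartlett-type decompositions with independent Gamma diagonals are specific to Gaussian/Wishart ensembles and do not carry over verbatim to independent Nakagami entries with unequal deterministic pathloss weights. The paper sidesteps exactly this difficulty by paying the price of an inequality plus an implicit distributional approximation on the weighted column norms; if you pursue the QR route you will need an analogous approximation on the triangular factors, so in the end both arguments rest on the same heuristic that a pathloss-weighted mixture of unit-mean $\gamma(m)$ variates behaves, for the purpose of its log-moment, like a single $\gamma(m)$ variate.
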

\begin{proof}
The proof is shown in Appendix \ref{Appen:NakagamiCapacity}.
\end{proof}

Note that Theorem \ref{thm:HighSNRNakagamiCapacity} is derived in high SNR case. This is because we are interested in supporting high transmission demand and small delay constraint\cite{zhang2015optimal}, and hence we restrict our analysis to high SNR case. The derived results are quite informative. This clearly indicates the individual effects of large-scale and small-scale fading on ergodic capacity. That is, the first parts of \eqref{equ:MIMOCapacityWithSmallScaleFaing} and \eqref{equ:MRCCapacityWithSmallScaleFaing} account for the large-scale fading, while the second parts explain the ergodic capacity loss caused by small-scale fading.

Now, we shall analyze the optimal PAWAS for hybrid traffic with Nakagami-$m$ fading, since PAWAS for delay-sensitive and -insensitive traffic are the special cases. Note that the effects of small-scale fading is only determined by $m$. Substitute \eqref{equ:MIMOCapacityWithSmallScaleFaing} and \eqref{equ:MRCCapacityWithSmallScaleFaing} into \eqref{equ:LagrangianFunction}, it can be derived that the $\mathcal{P}_{\rm{W}}(t)$ with Nakagami-$m$ fading can also be expressed as \eqref{equ:OptimalWaterFillingForNDS}. However, since the ergodic capacity loss caused by small-scale fading in MIMO doubles that in SIMO, it's hard to derive explicit $\zeta_{c}$ and $\zeta_{\mathcal{P}}$ in Nakagami-$m$ case. { Therefore, we rewrite the relevant antenna selection scheme as}
\begin{equation*}
\begin{split}
\Theta =&\bigg\{
\begin{array}{lc}
0,    & \mathcal{P}_{\rm{H}}(t)|_{\Theta=1} \geq \mathcal{P}_{\rm{H}}(t)|_{\Theta=0}\\
1,  & \mathcal{P}_{\rm{H}}(t)|_{\Theta=1} < \mathcal{P}_{\rm{H}}(t)|_{\Theta=0},\\
\end{array}
\end{split}
\end{equation*}
where $\mathcal{P}_{\rm{H}}(t)|_{\Theta=1}$ and $\mathcal{P}_{\rm{H}}(t)|_{\Theta=0}$ is the transmit power derived by assuming the selected antenna scheme is SIMO or MIMO, respectively. In this way, the instantaneous transmit power is minimized. The waterfilling coefficient $\hat{\eta}$ can be expressed as
\begin{equation}\label{equ:WaterFillingCoefficientForHybridInNakagami}
\begin{split}
&\hat{\eta}=\argmine_{\hat{\eta}} \Big\{ \int_0^{T/2} \Big(\mathbb{C}(t) - p_{\rm{on}}\min\big(\bar{L}\mu_{\rm{s}} , \mathbb{C}(t)\big)\Big)dt = \bar{L} \lambda_{\rm{i}} T/2 \Big\}.
\end{split}
\end{equation}
Note that $\mathbb{C}(t)$ in \eqref{equ:WaterFillingCoefficientForHybridInNakagami} is the ergodic channel capacity corresponding to the delay-insensitive part, i.e. $(\lambda_{\rm{i}},0,0)$. Then, the optimal PAWAS for hybrid traffic with Nakagami-$m$ fading can be summarized as follows.

\begin{prop}\label{prop:HybridOptimalPowerAndMDSInNakagami}
For hybrid traffic $(\lambda_{\rm{i}},\lambda_{\rm{s}},\tau_{\rm{max}})$ with Nakagami-$m$ fading, the optimal power allocation is
\begin{equation}\label{equ:HybridInstantaneousPowerInNakagami}
\mathcal{P}^*_{\rm{H}}(t) = \max\left(\hat{\mathcal{P}}^*_{\rm{I}}(t) , \mathcal{P}^*_{\rm{S}}(t)\right),
\end{equation}
where $\hat{\mathcal{P}}^*_{\rm{I}}(t)$ is determined by applying \eqref{equ:WaterFillingCoefficientForHybridInNakagami} to \eqref{equ:NDSoptPower}, and $\mathcal{P}^*_{\rm{S}}(t)$ is the optimal allocated power for the delay-sensitive part, i.e. $(0,\lambda_{\rm{s}},\tau_{\rm{max}})$.

The optimal antenna selection scheme is
\begin{equation*}
\begin{split}
\Theta^* =&\bigg\{
\begin{array}{lc}
0,    & \mathcal{P}_{\rm{H}}(t)|_{\Theta=1} \geq \mathcal{P}_{\rm{H}}(t)|_{\Theta=0}\\
1,  & \mathcal{P}_{\rm{H}}(t)|_{\Theta=1} < \mathcal{P}_{\rm{H}}(t)|_{\Theta=0}.\\
\end{array}
\end{split}
\end{equation*}
\end{prop}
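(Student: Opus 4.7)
The plan is to reduce this Nakagami-$m$ statement to the sparse-scattering hybrid case (Proposition \ref{prop:HybridOptimalPowerAndMDS}) by exploiting the ergodicity already used in the paragraph preceding Theorem \ref{thm:HighSNRNakagamiCapacity}. Because the coherence time is much smaller than both $\tau_{\rm{max}}$ and $T/2$, every cumulative capacity integral appearing in $\mathbf{P}_1$ can be replaced by the corresponding ergodic capacity times duration, and Theorem \ref{thm:HighSNRNakagamiCapacity} supplies closed forms $\mathbb{C}_{\rm{M}}(m,t)=C_{\rm{M}}(t)+2(\psi(m)-\ln m)/\ln 2$ and $\mathbb{C}_{\rm{S}}(m,t)=C_{\rm{S}}(t)+(\psi(m)-\ln m)/\ln 2$. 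So the Nakagami problem has the same structure as $\mathbf{P}_{1\text{-}\rm{A}}$ and $\mathbf{P}_{1\text{-}\rm{B}}$, only with the capacity functions translated by a constant that depends on $m$ and on the antenna mode.

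First I would handle the delay-insensitive part. Repeating the Lagrangian argument of Appendix \ref{Appen:PowerAllocationForNDS} on the translated problem, the critical observation is that the $m$-dependent corrections $2(\psi(m)-\ln m)/\ln 2$ and $(\psi(m)-\ln m)/\ln 2$ do not depend on $\mathcal{P}(t)$, so $\partial \mathbb{C}/\partial \mathcal{P}=\partial C/\partial \mathcal{P}$ in each mode. Consequently the stationarity condition collapses to the same generalized waterfilling expression \eqref{equ:OptimalWaterFillingForNDS}; only the rate-balance equation that pins down $\eta$ is shifted by the ergodic correction. After imposing the hybrid rate-balance argument of Section \ref{Sec:PAWASInDSAWGN}, this shift is exactly \eqref{equ:WaterFillingCoefficientForHybridInNakagami}, which yields $\hat{\mathcal{P}}^*_{\rm{I}}(t)$ via \eqref{equ:NDSoptPower}.

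Next I would treat the delay-sensitive part. Lemma \ref{lem:MaxDelayEfficiency} is proved purely from the monotonicity of the M/M/1 formula in $\mu_{\rm{s}}$ and does not use the explicit capacity expression, so $\tau^*=\tau_{\rm{max}}$ still holds in the Nakagami case; inverting $\mathbb{C}(\mathcal{P},t)=\bar{L}(\lambda_{\rm{s}}+1/\tau_{\rm{max}})$ then gives $\mathcal{P}^*_{\rm{S}}(t)$. The hybrid combination $\mathcal{P}^*_{\rm{H}}(t)=\max(\hat{\mathcal{P}}^*_{\rm{I}}(t),\mathcal{P}^*_{\rm{S}}(t))$ follows from the same partitioning logic as in Proposition \ref{prop:HybridOptimalPowerAndMDS}: at each $t$, one must either serve delay-sensitive traffic at rate $\bar{L}\mu_{\rm{s}}$ or serve delay-insensitive traffic at the waterfilling rate, whichever demands more power.

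The main obstacle, and the step that actually has to change relative to Proposition \ref{prop:HybridOptimalPowerAndMDS}, is the antenna-selection rule. The closed-form thresholds $\zeta_{\mathcal{P}}$ and $\zeta_c$ in Lemma \ref{lem:PThreshold_DelayConstrainedMDSInAWGN} came from equating $\mathbb{C}_{\rm{M}}=\mathbb{C}_{\rm{S}}$ when only the large-scale terms were present; once the two modes receive different offsets $2(\psi(m)-\ln m)/\ln 2$ and $(\psi(m)-\ln m)/\ln 2$, this equation becomes a transcendental one in $\mathcal{P}$ with no elementary solution. The workaround, which I would justify by the same instantaneous-minimization principle used at the end of Section \ref{Sec:subsubMathmaticalFormulation}, is to compute the two candidate powers $\mathcal{P}_{\rm{H}}(t)|_{\Theta=0}$ and $\mathcal{P}_{\rm{H}}(t)|_{\Theta=1}$ and select the smaller. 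I would then verify that this point-wise comparison is compatible with the global waterfilling constraint: any strictly suboptimal mode at some $t$ could be swapped without violating the ergodic rate balance in \eqref{equ:WaterFillingCoefficientForHybridInNakagami}, while strictly decreasing the objective; this contradiction establishes the stated selection rule and completes the argument.
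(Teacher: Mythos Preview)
Your proposal is correct and mirrors the paper's own argument almost exactly: the paper likewise substitutes the ergodic capacities of Theorem \ref{thm:HighSNRNakagamiCapacity} into the Lagrangian \eqref{equ:LagrangianFunction}, observes that the $m$-dependent offsets are independent of $\mathcal{P}(t)$ so that \eqref{equ:OptimalWaterFillingForNDS} survives unchanged, and then replaces the closed-form thresholds of Lemma \ref{lem:PThreshold_DelayConstrainedMDSInAWGN} by the direct power comparison because the unequal MIMO/SIMO offsets preclude explicit $\zeta_c,\zeta_{\mathcal{P}}$. One small inaccuracy: Lemma \ref{lem:MaxDelayEfficiency} is not proved ``purely from the M/M/1 formula'' but uses the explicit capacity expressions to show $\partial(\rho\mathcal{P})/\partial\mathcal{P}>0$; however, since the Nakagami ergodic capacities differ from $C_{\rm{M}}(t),C_{\rm{S}}(t)$ only by additive constants, that derivative argument carries over verbatim, so your conclusion stands.
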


Accordingly, Proposition \ref{prop:HybridOptimalPowerAndMDSInNakagami} can be solved with few amendments on Algorithm \ref{alg:PowerAllocationAndMDSInAWGN}, which are listed as follows
\begin{enumerate}
\item In line \ref{line:1} and \ref{line:2}, $\log_2\left( 1 + \Gamma(\mathcal{P}_{\rm{H}}(t),t) \mathcal{P}_{\rm{H}}(t)\right)$ is replaced by \eqref{equ:MIMOCapacityWithSmallScaleFaing} or \eqref{equ:MRCCapacityWithSmallScaleFaing};
\item In line \ref{line:3}, $\mathcal{P}^*_{\rm{H}}(t)$ is derived by substituting $\hat{\eta}_{\rm{max}}$ into \eqref{equ:OptimalWaterFillingForNDS} and \eqref{equ:HybridInstantaneousPowerInNakagami};
\item In line \ref{line:4}, replace $\mathcal{P}^*_{\rm{H}}(t) \geq \zeta_{\mathcal{P}}$ with $\mathcal{P}_{\rm{H}}(t)|_{\Theta=1} \geq \mathcal{P}_{\rm{H}}(t)|_{\Theta=0}$.
\end{enumerate}

\section{Numerical Results}\label{Sec:NumericalRes}

In this section, numerical results are presented to show the validity of our theoretical results and provide more insights on the effectiveness of our proposed PAWAS. In simulations, the system parameters shown in Fig. \ref{Fig:MultipleAntennaModel} are $d_{\rm{r}}$=400 m, $d_v$=100 m and $d_h$=1000 m. The velocity of train $v$ and the path-loss exponent $\iota$ is set as 500 km/h and 3.8, respectively. The normalized average package length $\bar{L}$ is assumed to be 0.01 bit/Hz.

\subsection{Delay-sensitive Traffic in Sparse Scattering Scenarios}\label{Simu:DSInAWGN}

\begin{figure}[tbp]
\centering
\subfigure[$\lambda_{\rm{s}} = 800~\rm{s}^{-1}$, $\mathcal{P}_{\rm{max}}=110$ dB]{ \label{Fig:DS800Pmax}
\includegraphics[width=0.23\columnwidth]{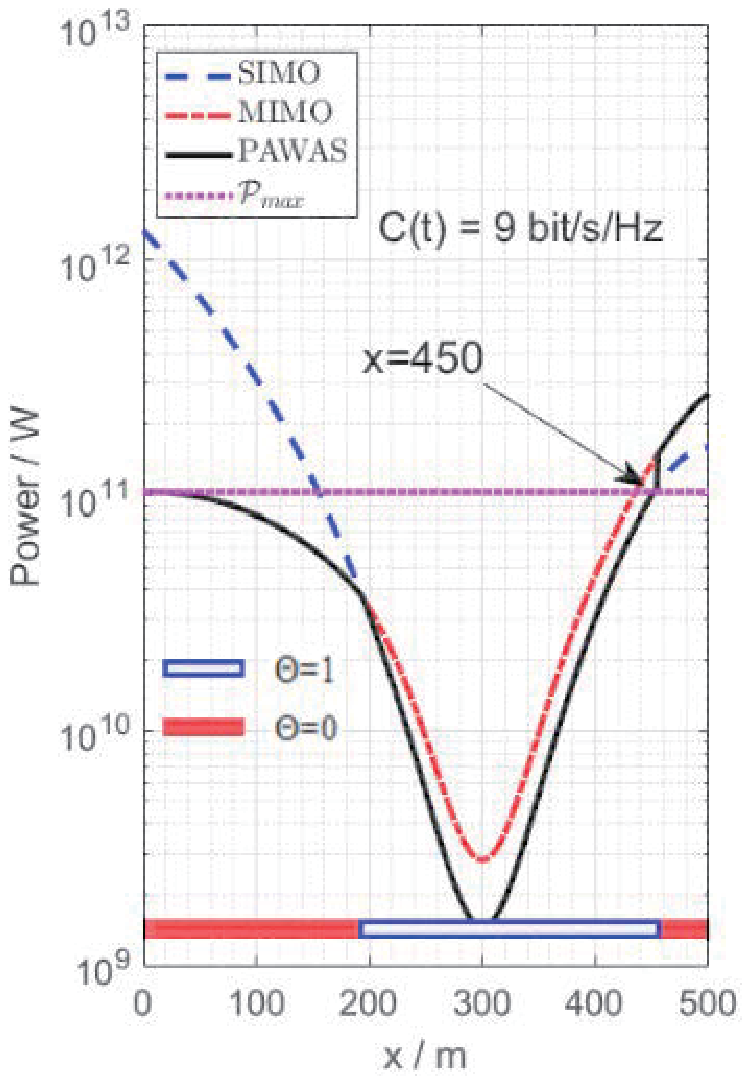}}
\subfigure[$\lambda_{\rm{s}} = 800~\rm{s}^{-1}$] { \label{Fig:DS800}
\includegraphics[width=0.23\columnwidth]{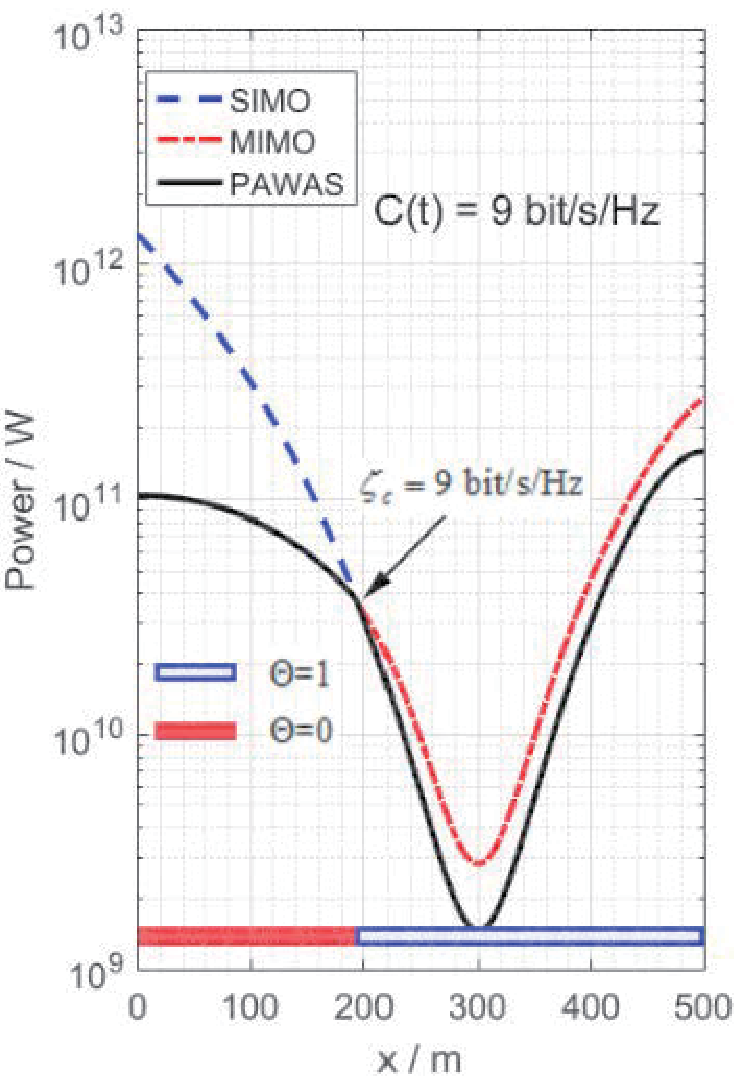}}
\subfigure[$\lambda_{\rm{s}} = 1000~\rm{s}^{-1}$] { \label{Fig:DS1000}
\includegraphics[width=0.23\columnwidth]{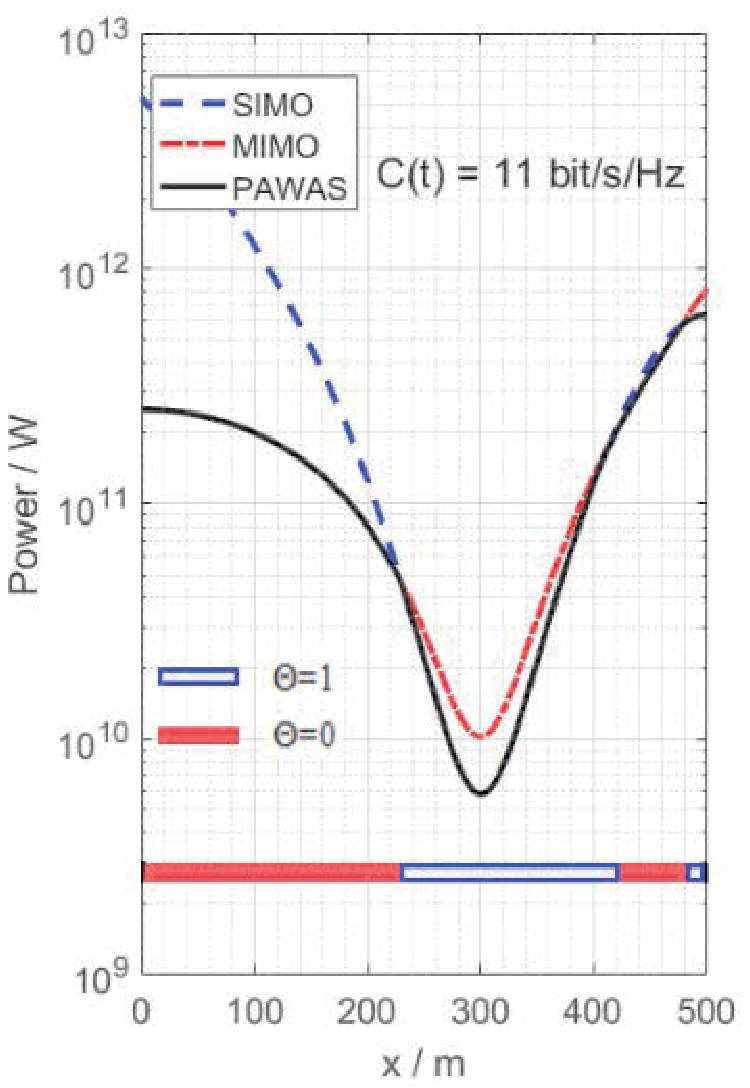}}
\subfigure[$\lambda_{\rm{s}} = 1500~\rm{s}^{-1}$] { \label{Fig:DS1500}
\includegraphics[width=0.23\columnwidth]{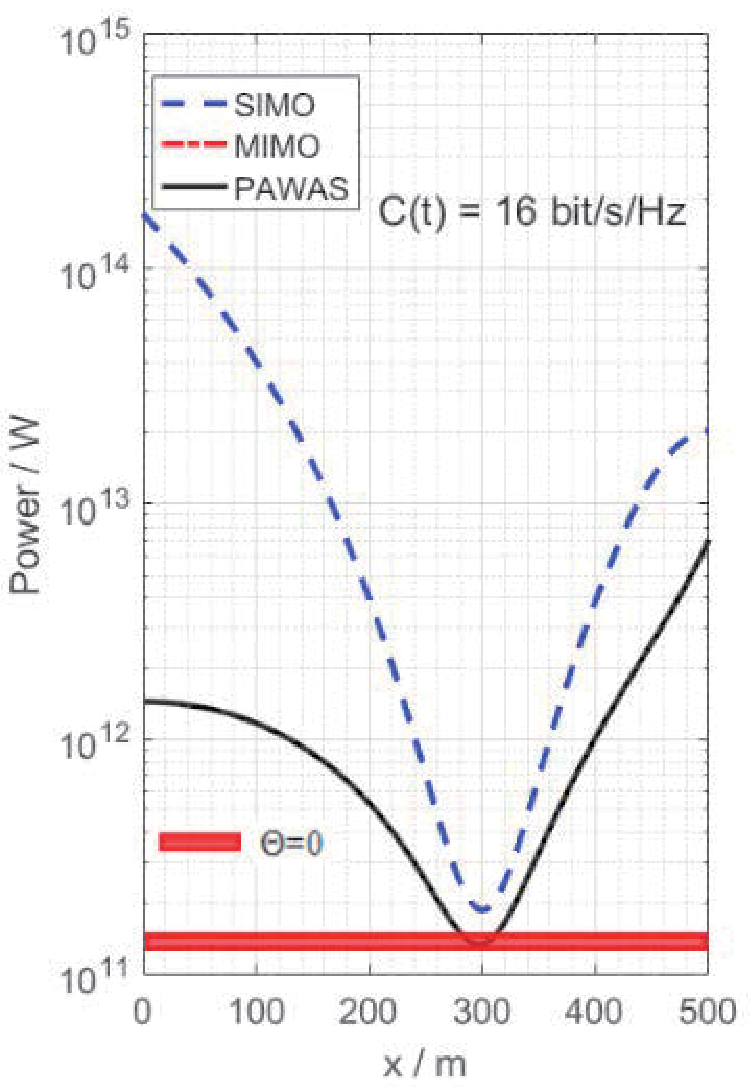}}
\caption{The optimal PAWAS for $(0,\lambda_{\rm{s}},10)$. In (a), the maximum power constraint in each individual RAUs is 110 dB, while in (b)-(d), $\mathcal{P}_{\rm{max}}$ are assumed to be sufficiently large. $\lambda_{\rm{s}}$ in (a) and (b) is 800 $\rm{s}^{-1}$. In (c) and (d), $\lambda_{\rm{s}}=$ 1000 $\rm{s}^{-1}$ and 1500 $\rm{s}^{-1}$, respectively. In all these sub-figures, $\tau_{\rm{max}}=10$ ms. The colored bar at the bottom of each sub-figures demonstrates the selected antenna scheme.} \label{Fig:DSFrom800to1500}
\end{figure}

For delay-sensitive traffic, the optimal PAWAS is depicted in Fig. \ref{Fig:DSFrom800to1500}. The corresponding $\tau_{\rm{max}}$ is 10 ms, and $\rm{RAU}_2$ in Fig. \ref{Fig:MultipleAntennaModel} is located at $x=500$ m, i.e. the right margin of each sub-figures. In Fig. \ref{Fig:DS800Pmax}, the maximum power constraint $\mathcal{P}_{\rm{max}}$ in each individual RAUs is 110 dBW, while in Fig. \ref{Fig:DS800}-\ref{Fig:DS1500}, $\mathcal{P}_{\rm{max}}$ are assumed to be sufficiently large. The arrival rate $\lambda_{\rm{s}}$ in Fig. \ref{Fig:DS800Pmax} and Fig. \ref{Fig:DS800} is 800 $\rm{s}^{-1}$, and in Fig. \ref{Fig:DS1000} and Fig. \ref{Fig:DS1500}, $\lambda_{\rm{s}}=$ 1000 $\rm{s}^{-1}$ and 1500 $\rm{s}^{-1}$, respectively. The demanded service rates $C(t)$ are depicted in each sub-figures, where the maximum delay constraint $\tau_{\rm{max}}=10$ ms.

In Fig. \ref{Fig:DS800Pmax}, it can be observed that even though SIMO is more energy efficient in $x\in[450,500]$, MIMO is selected due to the limited $\mathcal{P}_{\rm{max}}$. Notice that in Fig. \ref{Fig:DS800}, MIMO is switched to SIMO at $x=200$. In this position, the theoretical $\zeta_c$ derived from Lemma \ref{lem:PThreshold_DelayConstrainedMDSInAWGN} is 9 bit/s/Hz, which agrees with the simulated service rate and verifies the validity of Lemma \ref{lem:PThreshold_DelayConstrainedMDSInAWGN}. Observe Fig. \ref{Fig:DS800Pmax} and Fig. \ref{Fig:DS800}, it can be seen that MIMO is selected at positions near to $x=0$. This is because the difference between $\alpha_1$ and $\alpha_2$ shown in \eqref{equ:MDSPThreshold} and \eqref{equ:MDSRThreshold} is small, resulting in relatively small $\zeta_\mathcal{P}$ and $\zeta_c$. A more intuitive explanation is that since both adjacent RAUs are reachable to receive antennas in these positions, MIMO outperforms SIMO decreases of multiplex gain. By contrast, in positions near to $\rm{RAU}_2$ located at $x=500$, the path attenuation from $\rm{RAU}_1$ is extremely high. Therefore, allocating full power to $\rm{RAU}_2$ is more energy efficient than splitting power evenly to both $\rm{RAU}_1$ and $\rm{RAU}_2$.

Compare the allocated power of PAWAS with that of MIMO and SIMO in Fig. \ref{Fig:DS800}-\ref{Fig:DS1500}, it can be seen that our proposed PAWAS always select the antenna scheme with less transmit power. That is, the average transmit power for delay-sensitive traffic can be efficiently minimized by PAWAS. Also, the allocated power of PAWAS can be considered as a generalization of channel-inversion. As shown in Fig. \ref{Fig:MultipleAntennaModel}, the channel gain achieves its maximum at $x=300$, where the distance between $\rm{MR}_2$ and $\rm{RAU}_2$ is minimized. Therefore, system allocates minimum transmit power to $x=300$. At other positions, the farther $\rm{MR}_2$ to $\rm{RAU}_2$, the lower the channel gain, and hence the higher the allocated transmit power. Also, in Fig \ref{Fig:DS800}-\ref{Fig:DS1500}, it can be observed that the higher the $\lambda_{s}$, the more positions that MIMO covers. This is because according to \eqref{equ:DemandedRateDelayConstrainedInAWGN}, the demanded capacity is proportional to $\lambda_{s}$. Since the capacity threshold $\zeta_c$ remains unchanged, increasing capacity demand leads system to chose MIMO in more positions. Especially, for case with $\lambda_{\rm{s}}$=1500 $\rm{s}^{-1}$, MIMO is selected for all $x \in [0,500]$. This is consistent with Corollary \ref{cor:MaxZetaMuForMRC}, because the demanded capacity $C(t)=$16 bit/s/Hz exceeds the theoretical maximum $\zeta_c=13.7$ bit/s/Hz.


\subsection{Delay-insensitive Traffic in  Sparse Scattering Scenarios}\label{sec:DelayInsAWGN}

\begin{figure}[tbp]
\centering
\subfigure[]{ \label{Fig:DelayInsAvePowPA}
\includegraphics[width=0.3\columnwidth]{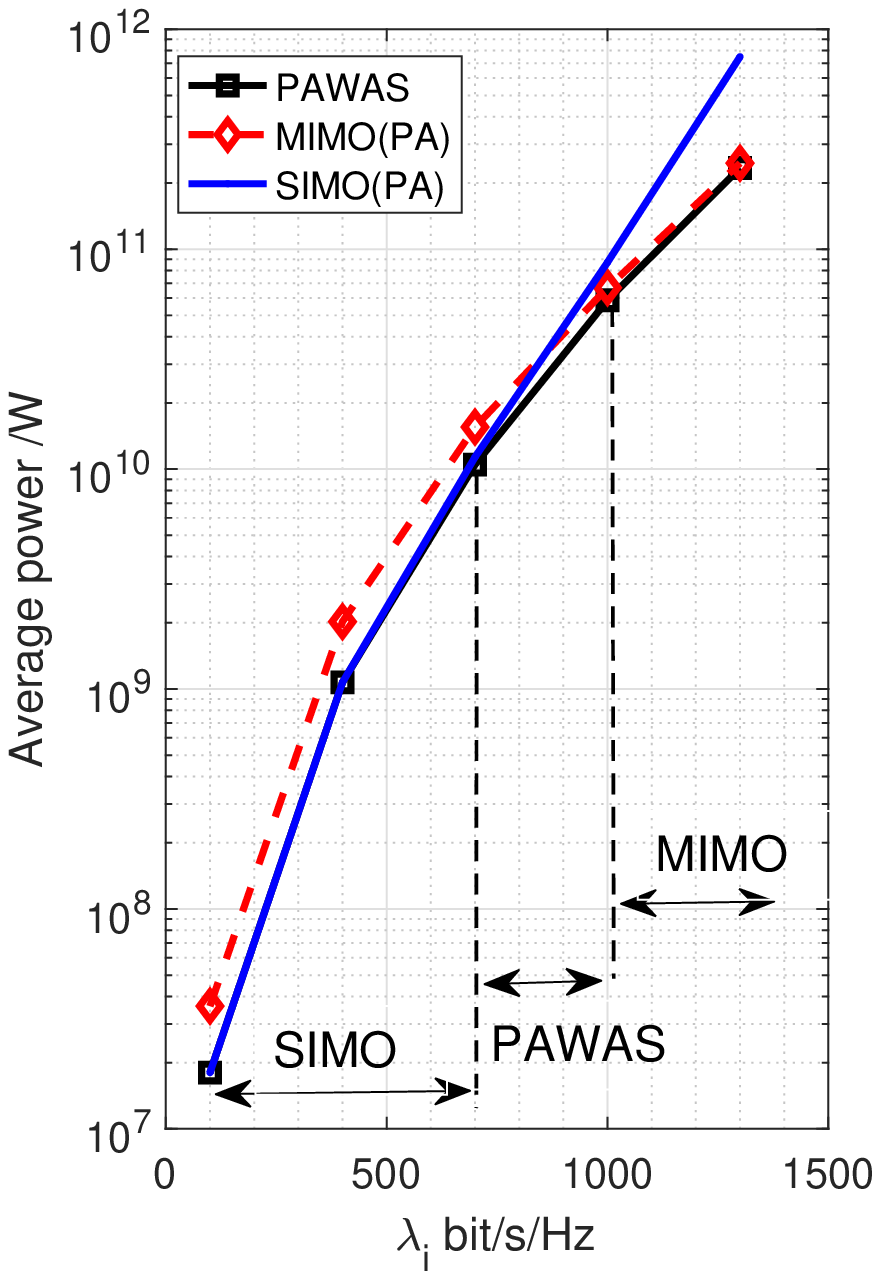}}
\subfigure[]{ \label{Fig:DelayInsAvePowEA}
\includegraphics[width=0.3\columnwidth]{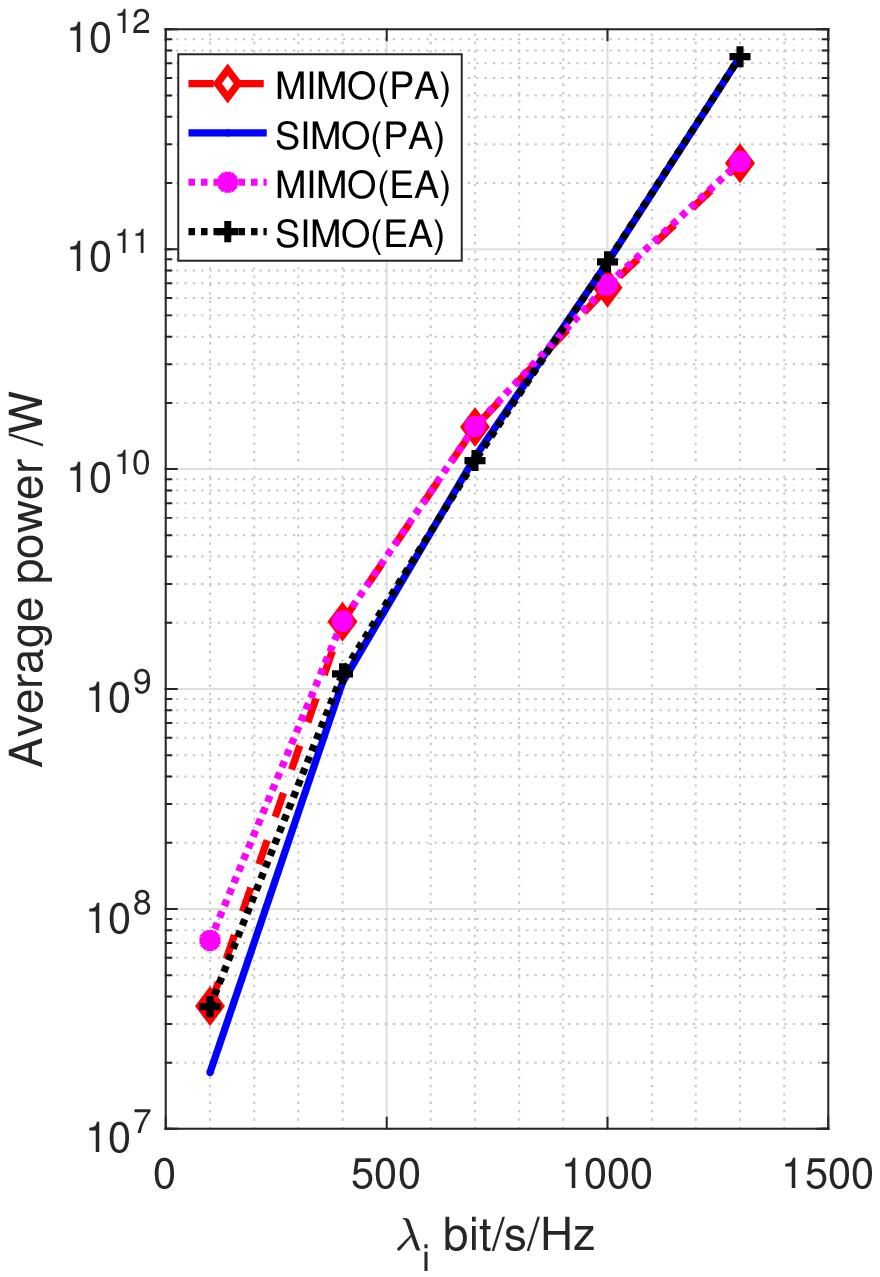}}
\caption{(a) compares the average power between PAWAS, MIMO with time-domain power allocation and SIMO with time-domain power allocation, and (b) compares time-domain power allocation and even power allocation with respect to MIMO and SIMO. The serviced traffic is $(\lambda_{\rm{i}},0,0)$, where $\lambda_{\rm{i}}$ varies between [100,1300] bit/s/Hz.} \label{Fig:DelayInsAvePow}
\end{figure}

Since PAWAS for delay-insensitive traffic is one of the special cases of that for hybrid traffic, we can discuss its performance by the simulations of hybrid traffic in following sub-section. In this part, we only verify the effectiveness of our proposed PAWAS on minimizing average transmit power, when system serves delay-insensitive traffic. Fig. \ref{Fig:DelayInsAvePowPA} compares the average power between PAWAS, MIMO and SIMO with time-domain power allocation. In addition, Fig. \ref{Fig:DelayInsAvePowEA} compares time-domain power allocation and even power allocation with respect to MIMO and SIMO. The arrival rate of delay-insensitive traffic $\lambda_{\rm{i}}$ varies between [100,1300] bit/s/Hz.

In Fig. \ref{Fig:DelayInsAvePowPA}, one can observed that our proposed PAWAS consumes least average transmit power for arbitrary $\lambda_{\rm{i}}$ in these five simulated power-allocation methods, since Fig. \ref{Fig:DelayInsAvePowEA} shows that the even power allocation method in MIMO and SIMO consumes more average transmit power than the corresponding time-domain power allocation method. Similarly, in sub-section \ref{Simu:DSInAWGN}, we have proved that our proposed PAWAS also minimizes average transmit power when serving delay-sensitive traffic. Since hybrid traffic is the simple aggregation of delay-sensitive and -insensitive traffic, we can conclude that our proposed PAWAS can minimize average transmit power efficiently for arbitrary traffic demand $(\lambda_{\rm{i}},\lambda_{\rm{s}},\tau_{\rm{max}})$.

Also, Fig. \ref{Fig:DelayInsAvePowPA} shows that in cases with low traffic demand, such as $\lambda_{\rm{i}}<700$ bit/s/Hz, SIMO with time-domain power allocation is suboptimal. This is because in low traffic demand cases, PAWAS select SIMO in most positions as train running along the railway, which will be shown in following sub-sections. In this cases, PAWAS degrades to SIMO with time-domain power allocation. However, by comparing Fig. \ref{Fig:DelayInsAvePowPA} and Fig. \ref{Fig:DelayInsAvePowEA}, we can observe that in cases with high traffic demand, such as $\lambda_{\rm{i}}>1000$ bit/s/Hz, MIMO with even power allocation is suboptimal. This is because in high traffic demand cases, PAWAS select MIMO in most positions and the water-filling coefficient $\eta$ is high, resulting in less effects of specific channel states compared to large $\eta$. This will also be shown in following sub-sections.

\subsection{Hybrid Traffic in  Sparse Scattering Scenarios}\label{Simu:HybridInAWGN}

\begin{figure}[tbp]
\centering
\subfigure[$(400,100,10)$]{ \label{Fig:Hyb400_100Pow}
\includegraphics[width=0.23\columnwidth]{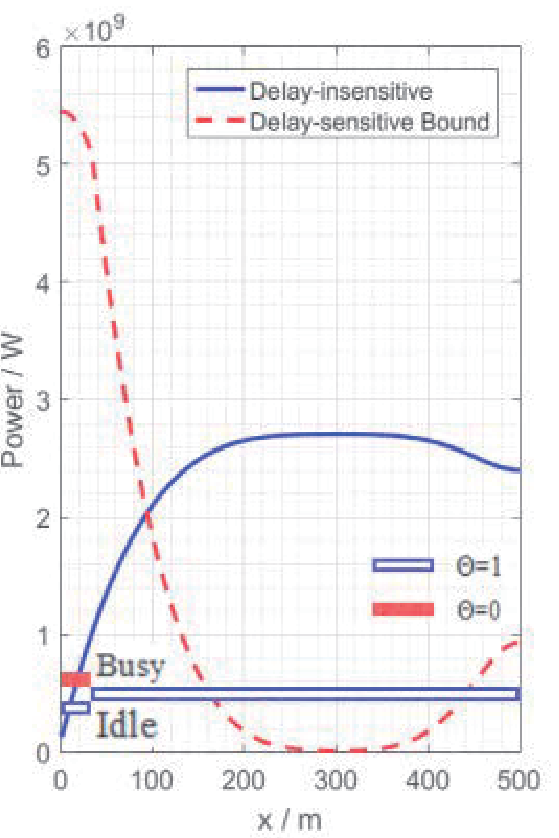}}
\subfigure[$(400,300,10)$]{ \label{Fig:Hyb400_300Pow}
\includegraphics[width=0.23\columnwidth]{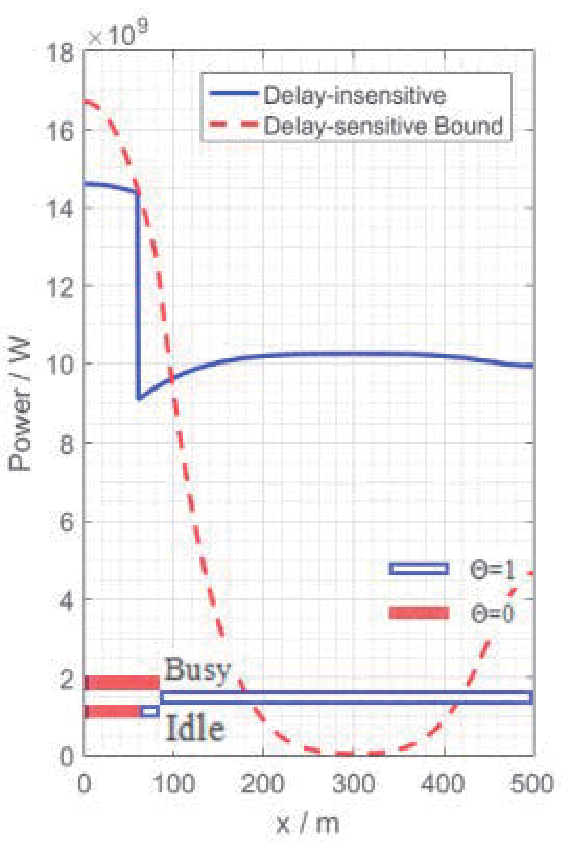}}
\subfigure[$(800,300,10)$]{ \label{Fig:Hyb800_300Pow}
\includegraphics[width=0.23\columnwidth]{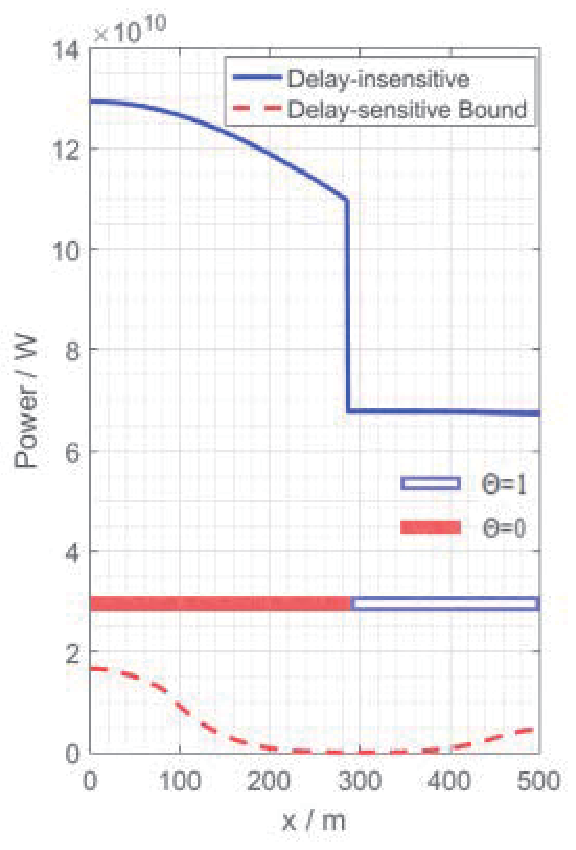}}
\subfigure[$(800,300,3)$]{ \label{Fig:Hyb800_300_3Pow}
\includegraphics[width=0.23\columnwidth]{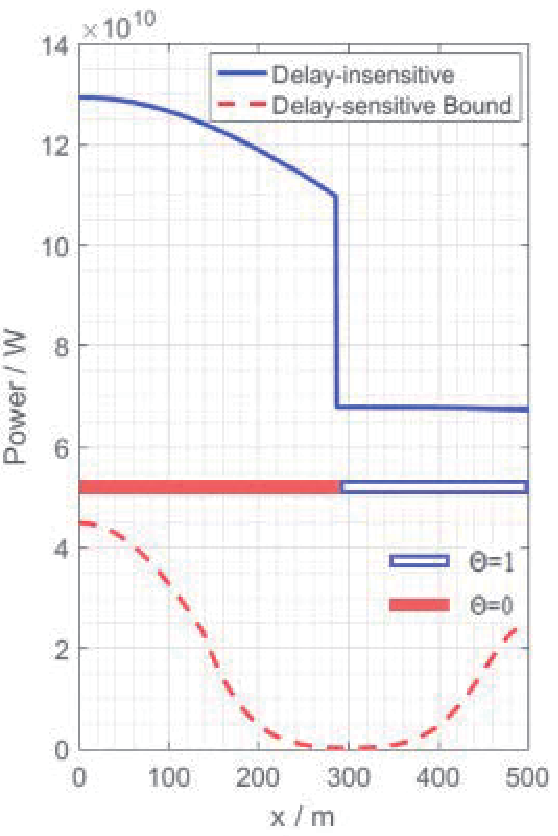}}
\caption{The optimal PAWAS for $(400,\lambda_{\rm{s}},10)$ are shown in (a) and (b), and the optimal PAWAS for $(800,300,\tau_{\rm{max}})$ are shown in (c) and (d). The colored bar at the bottom of each sub-figures demonstrates the selected antenna scheme.} \label{Fig:HybPowAndCapInVariousLambdaS}
\end{figure}

In Fig. \ref{Fig:HybPowAndCapInVariousLambdaS}, the performance of PAWAS for hybrid traffic is depicted. In each sub-figures, the arrival rates and delay constraints are labeled. The colored bars at the bottom of each sub-figures denote the selected antenna scheme. Specifically, the statuses 'Busy' and 'Idle' in Fig. \ref{Fig:HybPowAndCapInVariousLambdaS} denote the delay-sensitive traffic are being serviced or not. For instance, in Fig. \ref{Fig:Hyb400_100Pow}, when the system status is 'Idle', it select SIMO for $x\in[0,500]$, whereas when system status is 'Busy', MIMO is selected for $x\in[0,30]$. Similarly, in Fig. \ref{Fig:Hyb400_300Pow}, when system status is 'Idle', MIMO is selected for $x\in[0,60]$, whereas when system status is 'Busy', MIMO is selected for $x\in[0,80]$. Notice that for the concise of figures, we only depict the supposed allocated power relevant to stochastically arrived delay-sensitive traffic, which are denoted as delay-sensitive bounds and depicted in red-dash lines. It can be seen that the performance of delay-sensitive bounds agrees with the theoretical result shown in Section \ref{Simu:DSInAWGN}. Hence, in this part, we mainly discuss the power lines relevant to delay-insensitive traffic in 'Idle' status.

Observe Fig. \ref{Fig:Hyb400_100Pow}, SIMO is selected for all the positions in 'Idle' status. This is because the arrival rates of delay-sensitive and -insensitive traffic are limited, which leads to low capacity demand and $\hat{\eta}$ in \eqref{equ:WaterFillingCoefficientForHybrid}, $\hat{\mathcal{P}}_{\rm{I}} \leq \zeta_{\mathcal{P}}$ holds for $\forall x \in(0,500]$. However, in Fig. \ref{Fig:Hyb400_300Pow}, the arrival rate of delay-sensitive traffic is high, resulting in higher capacity demand and $\hat{\eta}$ in \eqref{equ:WaterFillingCoefficientForHybrid}. Therefore, when train is near $x=0$, $\hat{\mathcal{P}}_{\rm{I}}$ exceeds $\zeta_{\mathcal{P}}$ and MIMO is selected in 'Idle' status. Similarly, in Fig. \ref{Fig:Hyb800_300Pow} and Fig. \ref{Fig:Hyb800_300_3Pow}, due to the high capacity demands, system selects MIMO in 'Idle' status for $x\in[0,290]$. In summary, consistent with the performance of delay-sensitive traffic shown in Fig. \ref{Fig:DSFrom800to1500}, MIMO is selected in more positions for hybrid traffic with higher traffic demand, especially when train is near $x=0$.

For the power allocation shown in Fig. \ref{Fig:HybPowAndCapInVariousLambdaS}, it can be seen that when SIMO is selected, the power allocation is traditional waterfilling, whereas when MIMO is selected, the power allocation can be viewed as channel-inversion. Waterfilling for SIMO case is quite comprehensive. As illustrated in Section \ref{Simu:DSInAWGN}, when $x=300$, the channel gain achieves its maximum and allocating higher power in this case may enhance the energy efficiency. In high traffic demand cases shown in Fig. \ref{Fig:Hyb800_300Pow} and Fig. \ref{Fig:Hyb800_300_3Pow}, waterfilling degrades to even power allocation. This is because the inverse of channel gain ${1}/{\alpha_2}$ is less significant compared with $\hat{\eta}$. However, the optimal power allocation scheme in MIMO case is channel-inversion. This is unintuitive but reasonable. Before demonstrating that, we give following lemma first.

\begin{lem}\label{lem:MIMOPowerSpliting}
Define $f_1(x)=\log(a_1x^2+b_1x+1)$, $f_2(x)=\log(a_2x^2+b_2x+1)$ and $\Delta \rightarrow 0$. Let $b_1/a_1 < b_2/a_2$ and $x>2$. Then,
\begin{equation*}
f_1(x+\Delta)+f_2(x-\Delta) > f_1(x)+f_2(x).
\end{equation*}
\end{lem}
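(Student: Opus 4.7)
The natural plan is a first-order Taylor expansion. Define $g(\Delta) := f_1(x+\Delta) + f_2(x-\Delta) - f_1(x) - f_2(x)$, so $g(0)=0$ and $g'(0) = f_1'(x) - f_2'(x)$. Since $\Delta \to 0$, the sign of $g(\Delta)$ for the appropriate small $\Delta$ is controlled by $g'(0)$, so the claim reduces to showing $f_1'(x) > f_2'(x)$ at the given $x>2$.

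Computing $f_i'(x) = (2a_ix + b_i)/(a_ix^2 + b_ix + 1)$ and placing $f_1'(x)-f_2'(x)$ over the common (positive) denominator, the cubic terms in the numerator cancel and one is left with the quadratic
\[
Q(x) \;=\; (a_1 b_2 - a_2 b_1)\,x^2 \;+\; 2(a_1 - a_2)\,x \;+\; (b_1 - b_2).
\]
The hypothesis $b_1/a_1 < b_2/a_2$ is exactly $a_1 b_2 - a_2 b_1 > 0$, so $Q$ has a strictly positive leading coefficient and is certainly positive for large $x$. The role of the hypothesis $x>2$ is then to push us past any real roots of $Q$.

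Verifying this last sign check is the main obstacle, because neither $(a_1-a_2)$ nor $(b_1-b_2)$ has a forced sign under the hypotheses, so the linear and constant coefficients of $Q$ may individually be negative. The cleanest route I anticipate is one of: (i) directly bound $Q(2)$ and the vertex $x^\star = -(a_1-a_2)/(a_1b_2 - a_2 b_1)$, verifying $Q(2) \geq 0$ together with $x^\star \leq 2$, so that $Q$ is increasing on $(2,\infty)$ and already non-negative at the threshold; or (ii) exploit the MIMO-specific factorization $a_i x^2 + b_i x + 1 = (1 + q_{i,1}x)(1 + q_{i,2}x)$, which holds because $b_i^2 \geq 4a_i$ in the capacity expression \eqref{equ:MIMOCapacityLOS}, and rewrite $f_i$ as a sum of two $\log(1+qx)$ terms, reducing $f_1'(x) > f_2'(x)$ to a more structured comparison among rationals of the form $q/(1+qx)$ where the threshold $x>2$ naturally arises from bounding $qx/(1+qx)$ away from $1$ once $x$ exceeds the smallest $1/q_{i,j}$.

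Once $Q(x)>0$ is established for $x>2$, we have $f_1'(x) > f_2'(x)$, so $g'(0) > 0$, and the first-order Taylor expansion of $g$ around $\Delta = 0$ yields $g(\Delta) > 0$ for all sufficiently small positive $\Delta$, which is the desired inequality.
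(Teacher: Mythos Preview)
Your approach is exactly the paper's: a first-order Taylor expansion reducing the claim to $f_1'(x)>f_2'(x)$, then a sign check on the cross-multiplied numerator. Your derivative and your $Q(x)=(a_1b_2-a_2b_1)x^2+2(a_1-a_2)x+(b_1-b_2)$ are correct; the paper in fact has an algebra slip in its step~(a) (it writes $2a_i+b_i$ where $2a_ix+b_i$ is meant), so your computation is cleaner than theirs.

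You are also right to flag the sign of $Q(x)$ for $x>2$ as the genuine obstacle. The paper does not resolve it from the stated hypothesis $b_1/a_1<b_2/a_2$ either: in its step~(c) it quietly imports the extra inequalities $a_1<a_2$ and $b_1<b_2$, which come from the specific MIMO application in Section~\ref{Simu:HybridInAWGN} rather than from the lemma statement. With only $b_1/a_1<b_2/a_2$ and $x>2$, the claim is in fact false --- take $a_1=1,\ b_1=1,\ a_2=10,\ b_2=20$, so $b_1/a_1=1<2=b_2/a_2$, yet $Q(x)=10x^2-18x-19$ is negative on $(2,2.55)$; e.g.\ $Q(2.2)=-10.2$. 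Consequently neither of your proposed routes (i) or (ii) can succeed from the lemma's hypotheses alone: the vertex/threshold bound in (i) fails in the example above, and the factorization in (ii) does not by itself pin down the needed ordering among the $q_{i,j}$. To close the argument you must add the application-specific ordering the paper tacitly uses (monotone $a_i$, $b_i$ along with $b_1/a_1<b_2/a_2$), after which the sign of $Q$ on $(2,\infty)$ follows directly.
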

\begin{proof}
The proof is shown in Appendix \ref{Appen:MIMOPowerSpliting}.
\end{proof}
Then, we prove that channel-inversion in MIMO case is optimal by starting from evenly power allocation method. In \eqref{equ:MIMOCapacityLOS}, the proportion from ${(\alpha_1\alpha_2 - \beta^2)}/{4}$ to ${(\alpha_1 + \alpha_2)}/{2}$ decreases with increasing $|x-300|$, which can be verified by simulations. Consider the allocated power $\mathcal{P}_1$ and $\mathcal{P}_2$ in two positions $x_1$ and $x_2$, where $|x_1-300|>|x_2-300|$. We compare the provided capacities between evenly power allocation and channel-inversion. Let $\mathcal{P}_1=\mathcal{P}_2> 2~\rm{W}$, $\hat{\mathcal{P}}_1=\mathcal{P}_1+\Delta$, and $\hat{\mathcal{P}}_2=\mathcal{P}_2-\Delta$. According to Lemma \ref{lem:MIMOPowerSpliting}, $\hat{\mathcal{P}}_2$ and $\hat{\mathcal{P}}_2$ (i.e. channel-inversion) can provide more capacity than $\mathcal{P}_2$ and $\mathcal{P}_2$ (i.e. evenly power allocation), while consuming same transmit power. That is, the optimal power allocation strategy in MIMO case is channel-inversion.


Fig. \ref{Fig:Hyb800_300Pow} and Fig. \ref{Fig:Hyb800_300_3Pow} depict the special case shown in Corollary \ref{cor:TaumNotCare}, where $C(t) \geq \bar{L}\mu_{\rm{s}}$ holds for all $t\in[0,T/2]$. It can be observed that $\hat{\mathcal{P}}_{I}(t)$ and $C(t)$ are same between $\mathcal{T}(800,300,10)$ and $\mathcal{T}(800,300,3)$. That is, for cases with same $(\lambda_{\rm{s}}+\lambda_{\rm{i}})$ and sufficiently large $\lambda_{\rm{i}}$ compared to $\lambda_{\rm{s}}$, the PAWAS remains unchanged for arbitrary $\tau_{\rm{max}}$.

\subsection{Hybrid Traffic in Rich Scattering Scenarios}\label{Simu:HybridInNakagami}

\begin{figure}[tbp]
\centering
\subfigure[$m$=0.5]{ \label{Fig:NakaM0.5}
\includegraphics[width=0.23\columnwidth]{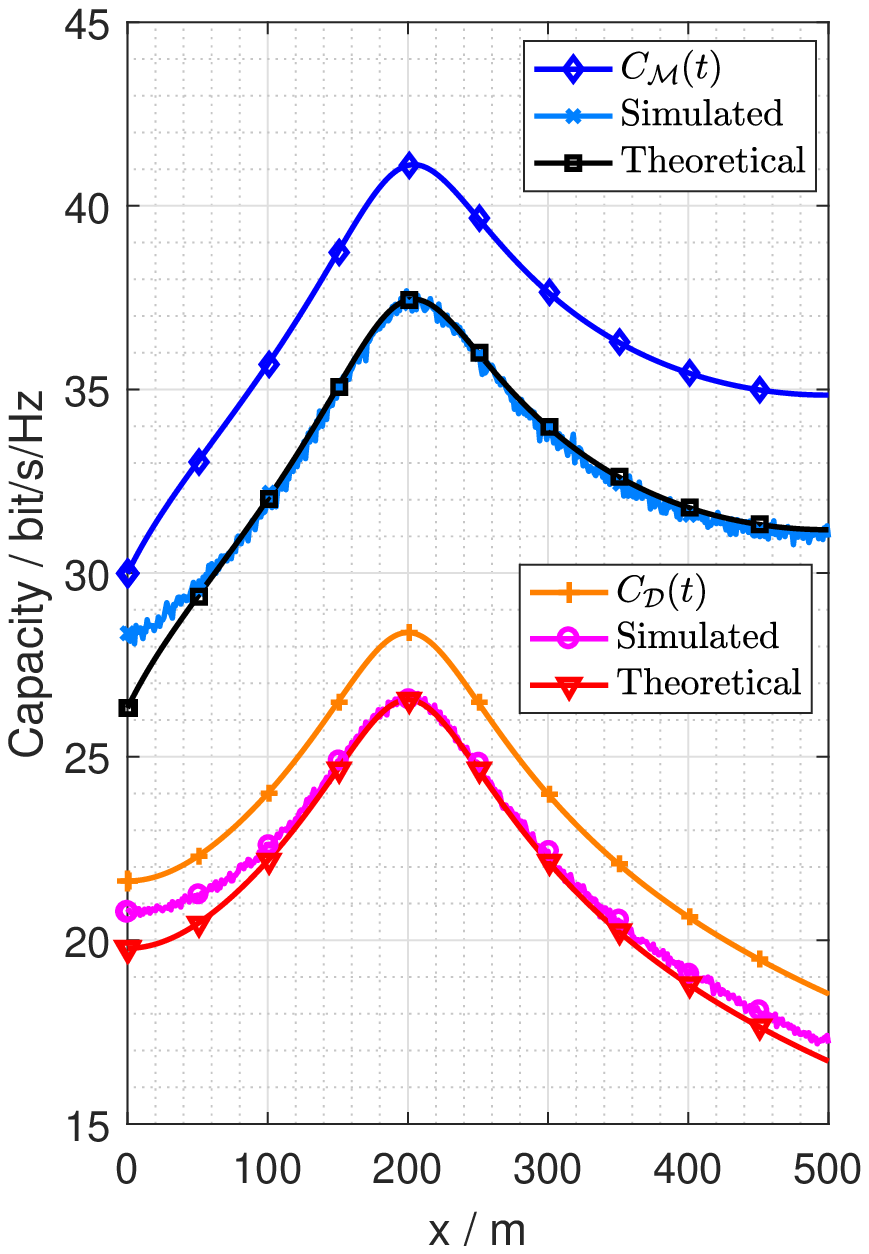}}
\subfigure[Err and Dist] { \label{Fig:CapacityErrOnM}
\includegraphics[width=0.23\columnwidth]{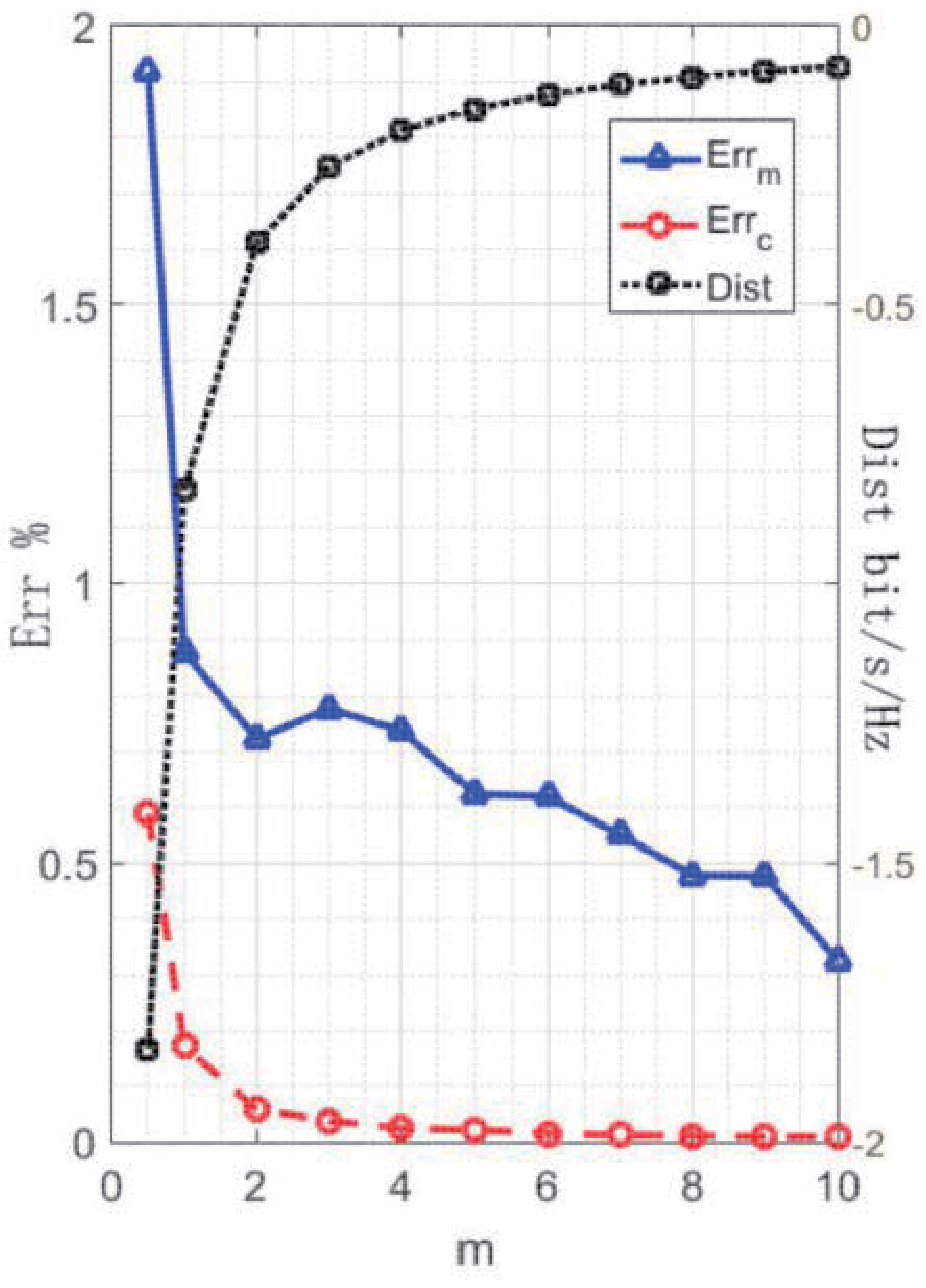}}
\subfigure[$(800,300,10)$ with $m$=0.5] { \label{Fig:Naka800_300_05}
\includegraphics[width=0.23\columnwidth]{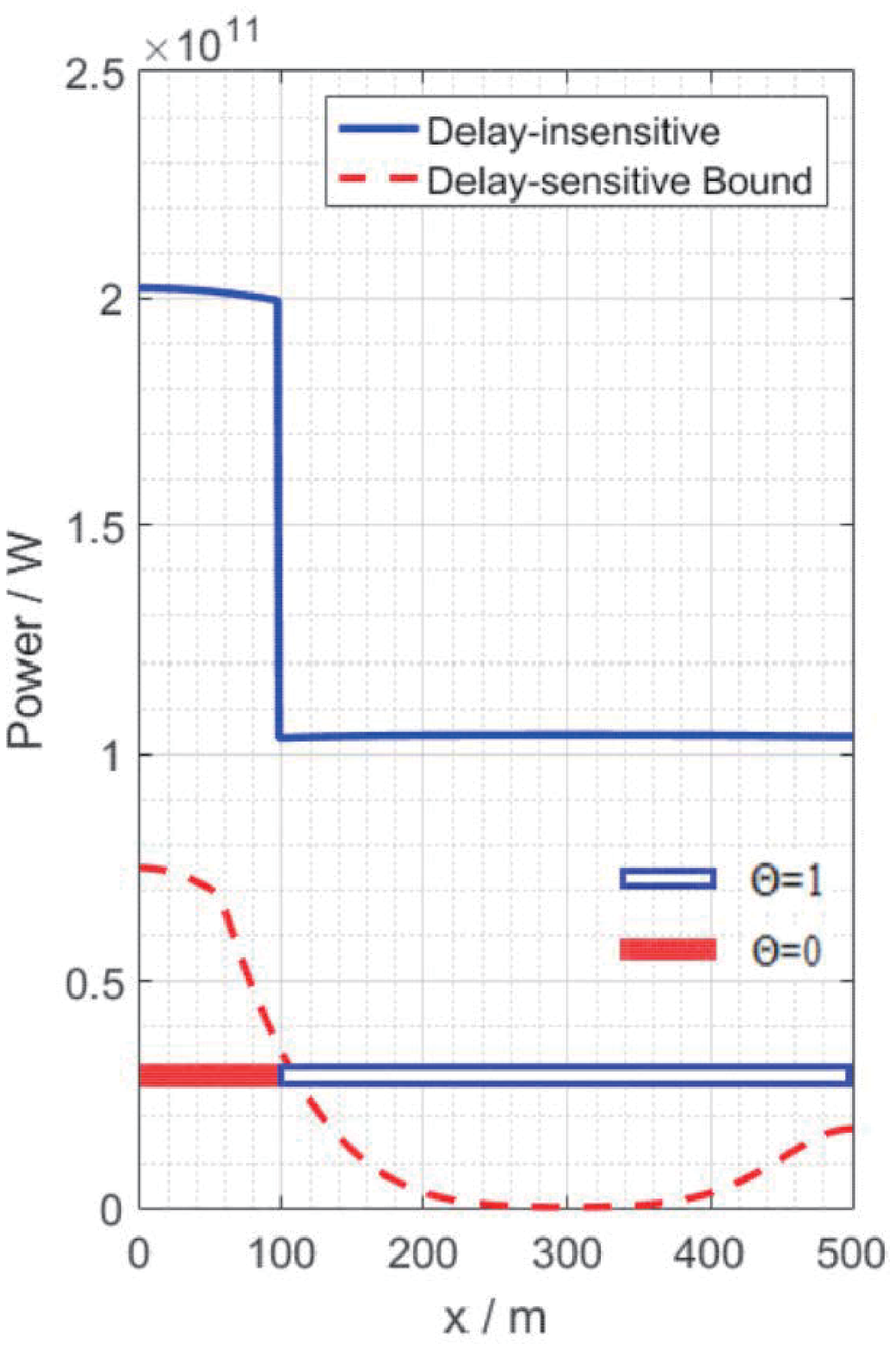}}
\subfigure[$(800,300,10)$ with $m$=3.5]{ \label{Fig:Naka800_300_35}
\includegraphics[width=0.23\columnwidth]{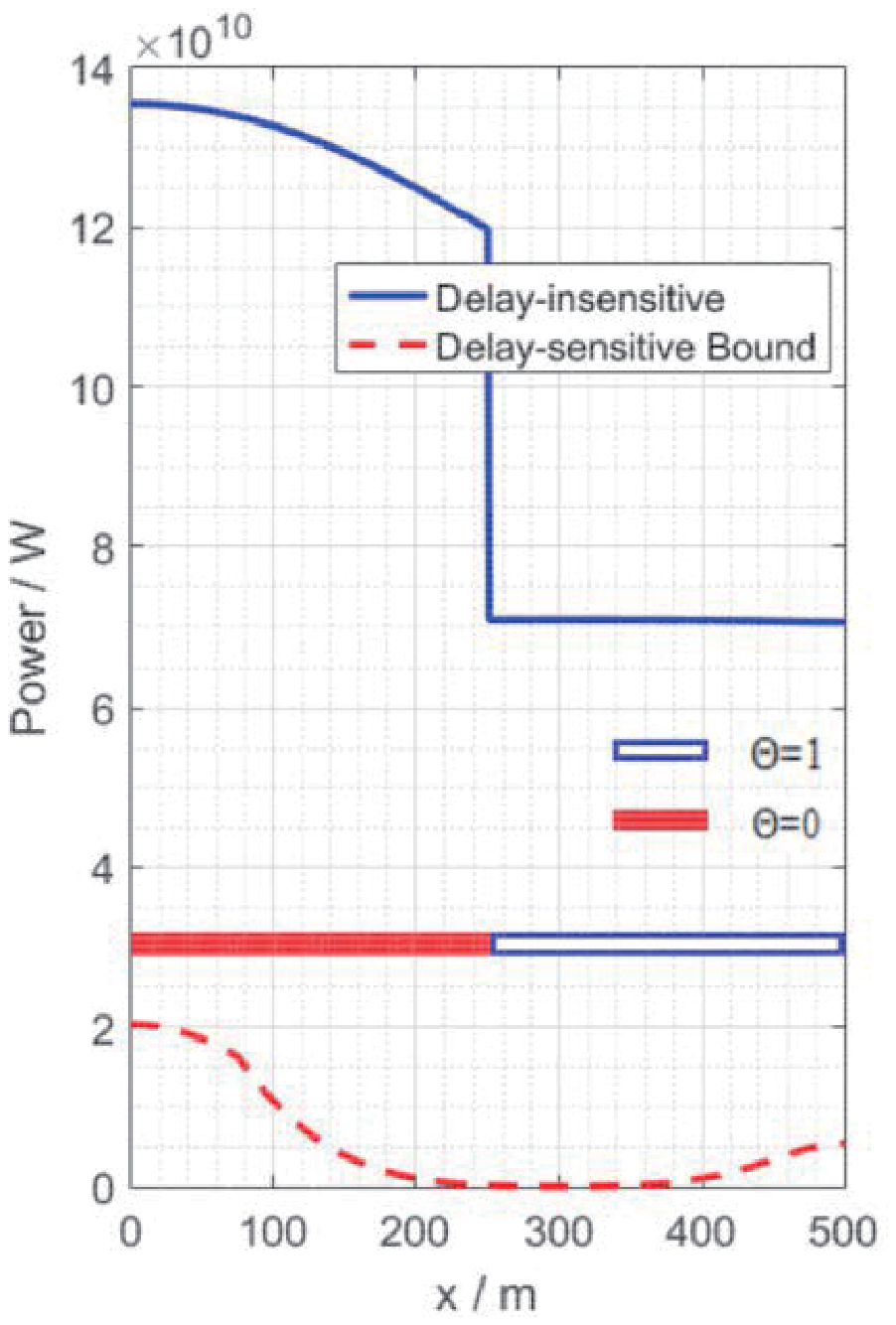}}
\caption{(a) compares the simulated and theoretical $\mathbb{C}_{1}(m,t)$ and $\mathbb{C}_{0}(m,t)$ by Monte-Carlo method, where $m$=0.5. Also, the capacities $C_1(t)$ and $C_0(t)$ in sparse scattering scenarios are depicted in (a). (b) shows the maximum relative errors and cumulative errors between simulation and theory in various $m$ by Monte-Carlo method. Also, the capacity distance is depicted. (c) compares the simulated and theoretical $\mathbb{C}_{1}(m,t)$ and $\mathbb{C}_{0}(m,t)$ by Monte-Carlo method, and the capacities $C_1(t)$ and $C_0(t)$ in AWGN scenarios are depicted in (c). (d) shows the maximum relative errors and the cumulative errors in various $m$ by Monte-Carlo method.} \label{Fig:NakaApproxInM}
\end{figure}
Firstly, we verify the validity of Theorem \ref{thm:HighSNRNakagamiCapacity} by Monte-Carlo simulations. To formulate the high SNR regime, we constrain the transmit power as 150 dBW. Let $m$=0.5 and the repetition number be 2000, Fig. \ref{Fig:NakaM0.5} depicts the simulated and theoretical $\mathbb{C}_{\rm{S}}(m,t)$ and $\mathbb{C}_{\rm{M}}(m,t)$ with Nakagami-$m$ fading. Also, the capacities of SIMO and MIMO in sparse scattering scenarios are also depicted. It can be seen that our theoretical results are quite explicit, except for $x\in[0,100]$. In these positions, due to the long distance between antenna pairs, the receive SNR is not high enough to ensure the accuracy of approximations in Theorem \ref{thm:HighSNRNakagamiCapacity}. In addition, define the relative maximum and cumulative errors as \eqref{equ:Errs}, which is shown at the top of next page. The superscripts s and t denote the simulated and theoretical capacities, respectively. Fig. \ref{Fig:CapacityErrOnM} shows the maximum relative errors and the cumulative errors in various $m$.  It can be seen that as $m$ increases, the approximation accuracy increases as well. Also, we define $\left(\psi(m)-\ln(m)\right)/\ln2$ as the capacity distance in sparse and rich scattering scenarios, which has been depicted in Fig. \ref{Fig:CapacityErrOnM} and decreases with respect to increasing $m$. We conclude that:

\begin{enumerate}
\item In severe fading cases, such as $m$=0.5, the approximation accuracy will become worse as SNR is relatively low. When SNR becomes high, the theoretical result gives satisfactory approximation to its real value. In flat fading cases, such as $m>2$, the approximation becomes more accurate compared to severe fading cases with the same SNR. That is, Theorem \ref{thm:HighSNRNakagamiCapacity} works well in high SNR or less severe fading scenarios.
\item As $m$ increases, the capacities in Nakagami-$m$ case approach to those in sparse scattering scenarios. Specifically, when $m\rightarrow \infty$, capacities in rich scattering scenarios is the same as those in sparse scattering scenarios, and hence the PAWAS is consistent in both scenarios.
\end{enumerate}

\begin{figure*}[!t]
\normalsize
\begin{equation}\label{equ:Errs}
\begin{split}
&\rm{Err}_m = \max_{t\in[0,T/2]} \max\left(\frac{|\mathbb{C}^{s}_{\rm{S}}(m,t)-\mathbb{C}^{t}_{\rm{S}}(m,t)|}{\mathbb{C}^{s}_{\rm{S}}(m,t)},\frac{|\mathbb{C}^{s}_{\rm{M}}(m,t)-\mathbb{C}^{t}_{\rm{M}}(m,t)|}{\mathbb{C}^{s}_{\rm{M}}(m,t)}\right),\\
&\rm{Err}_c = \max\left(\frac{|\int_{0}^{T/2}(\mathbb{C}^{s}_{\rm{S}}(m,t)-\mathbb{C}^{t}_{\rm{S}}(m,t))dt|}{\int_{0}^{T/2}\mathbb{C}^{s}_{\rm{S}}(m,t)dt},\frac{|\int_{0}^{T/2}(\mathbb{C}^{s}_{\rm{M}}(m,t)-\mathbb{C}^{t}_{\rm{M}}(m,t))dt|}{\int_{0}^{T/2}\mathbb{C}^{s}_{\rm{M}}(m,t)dt}\right).\\
\end{split}
\end{equation}
\hrulefill
\vspace*{4pt}
\end{figure*}

As shown in Fig. \ref{Fig:Naka800_300_05} and Fig. \ref{Fig:Naka800_300_35}, the optimal PAWAS for $m$=0.5 and 3.5 are depicted. According to Fig. \ref{Fig:CapacityErrOnM}, when $m$=0.5, small-scale fading leads to greatly capacity losses in SIMO and MIMO by 1.8 bit/s/Hz and 3.6 bit/s/Hz, respectively. Hence, compared to the simulation results shown in Fig. \ref{Fig:Hyb800_300Pow}, the allocated powers for delay-sensitive and -insensitive traffic in Fig. \ref{Fig:Naka800_300_05} is sharply increased. In addition, since the capacity loss in $2\times2$ MIMO doubles that in $1\times2$ SIMO, the switching points between SIMO and MIMO in rich scattering scenarios are closer to $x=0$ than those in sparse fading scenarios. In summary, we conclude that server fading increases the allocated power in both SIMO and MIMO and leads system to select SIMO in more positions.

\subsection{Effects of Power Constraints}\label{Sec:PowerConstraints}

\begin{figure}[tbp]
\centering
\subfigure[108 dB] { \label{Fig:pm108}
\includegraphics[width=0.23\columnwidth]{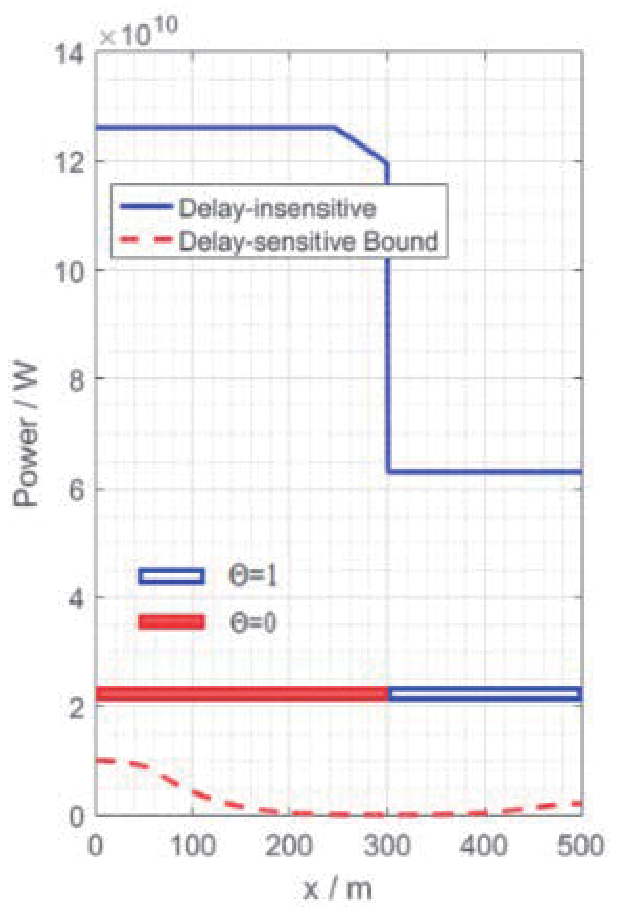}}
\subfigure[107.7 dB]{ \label{Fig:pm1077}
\includegraphics[width=0.23\columnwidth]{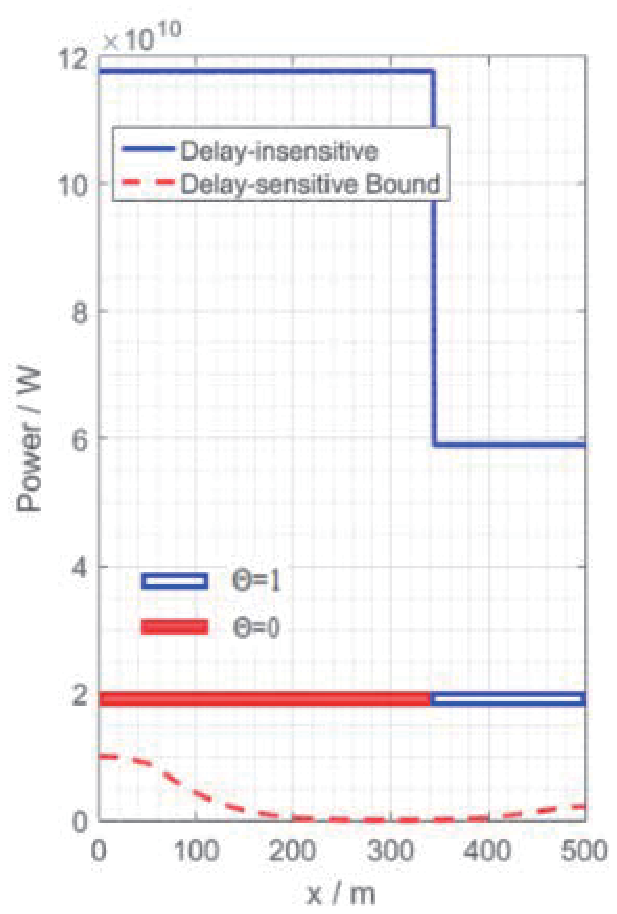}}
\subfigure[107.3 dB] { \label{Fig:pm1073}
\includegraphics[width=0.23\columnwidth]{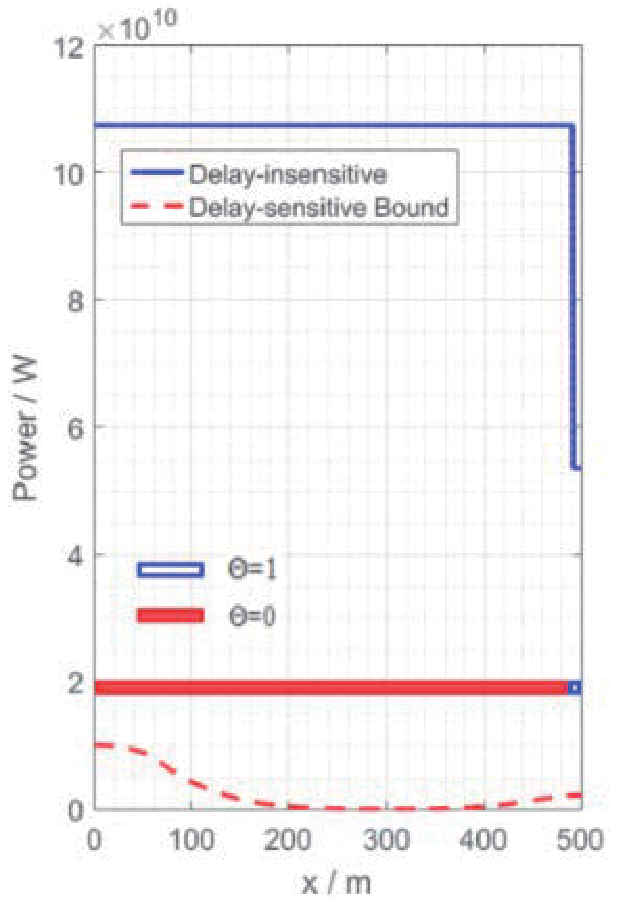}}
\subfigure[]{ \label{Fig:PmaxPosition}
\includegraphics[width=0.23\columnwidth]{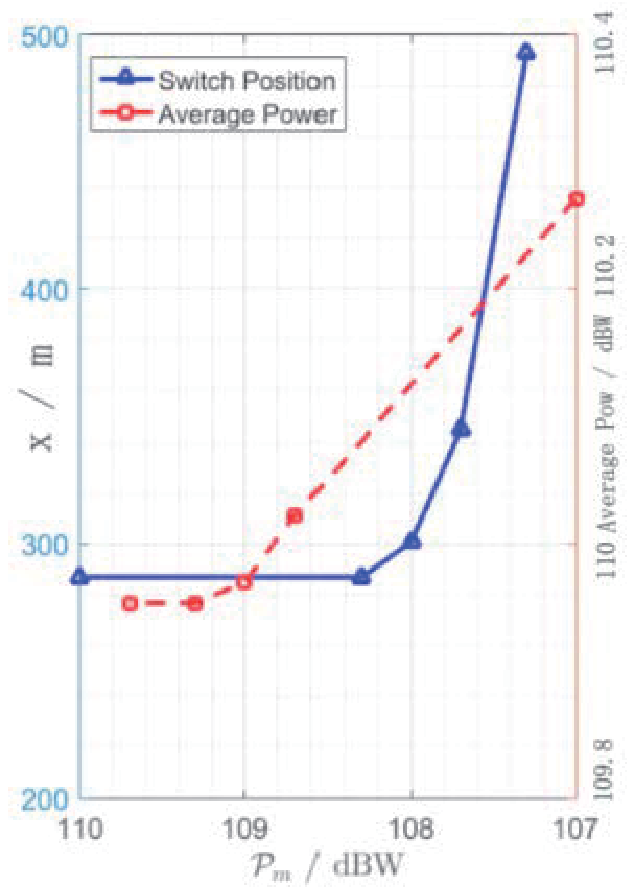}}
\caption{Sub-figures (a)-(c) depict the allocated power with $\mathcal{P}_m=$108.3 dB, , 108 dB and 107.3 dB. (d) shows the switching point positions and average transmit power with varying $\mathcal{P}_m$. In simulations, the traffic demand is $(800,300,10)$.} \label{Fig:PmaxEffects}
\end{figure}

In this part, we evaluate the effects of $\mathcal{P}_m$ on our proposed PAWAS. Since we assume that $\mathcal{P}_m$ is sufficiently large to provide stable service to delay-sensitive traffic, we take the hybrid traffic case with $C(t)\geq\bar{L}\big( \lambda_{\rm{s}} + {1}/{\tau_{\rm{max}}} \big)$ in sparse scattering scenarios for example. The optimal PAWAS with $\mathcal{P}_m=$108.3 dB, 108 dB and 107.3 dB and the corresponding switching point positions and average transmit power are shown in Fig. \ref{Fig:PmaxEffects}. From Fig. \ref{Fig:pm108}-Fig. \ref{Fig:pm1073}, we can observe that with lower $\mathcal{P}_m$, the maximum allocated power decreases and system is forced to select MIMO in more positions. Fig. \ref{Fig:PmaxPosition} shows the switching point positions corresponding to various $\mathcal{P}_m$. It can be seen that the positions are quite sensitive to $\mathcal{P}_m$. That is, when $\mathcal{P}_m=108.3$ dB, the optimal PAWAS is the same as that without power constraint. However, when $\mathcal{P}_m=107.3$ dB, the optimal PAWAS selects MIMO and allocates power evenly by $2\mathcal{P}_m$ in almost all positions. When $\mathcal{P}_m<107.3$ dB, there won't be any feasible power allocation strategy satisfying traffic demand $(800,300,10)$. Hence, we suggest that the maximum power constraints is supposed to be sufficiently large to meet the traffic demands. Otherwise, the PAWAS will be sharply degraded to even time-domain power allocation with MIMO, where $\mathcal{P}(t)=2\mathcal{P}_m$ and $t\in[0,T/2]$. Also, it can be seen that the average transmit power increases slightly when $\mathcal{P}_m$ varies from 110 dB to 107.3 dB, meaning that in this high traffic demand case, selecting MIMO and allocating power evenly in all positions is sub-optimal. This agrees with the theoretical results shown in sub-section \ref{Sec:PAWASInDSAWGN}.

\section{Conclusion}\label{Sec:Conclusion}

This paper focused on high-speed railway communication system with DAS and proposed a joint optimal power allocation with antenna selection (PAWAS) method to minimize average transmit power, where three different traffic patterns, i.e. delay-sensitive and -insensitive traffic, and hybrid traffic were considered. By defining \emph{effective channel gain}, we firstly derived the PAWAS for delay-sensitive and -insensitive traffic in sparse scattering scenarios, respectively. Then, we prove that the PAWAS for hybrid traffic is simple aggregation of that for its delay-sensitive and -insensitive part. We also proved that the PAWAS for delay-sensitive traffic can be viewed as the generalization of channel-inversion but with transmit antenna selection. For delay-insensitive traffic, when MIMO is selected, the power allocation can be treated as channel-inversion, whereas if SIMO is selected, it becomes traditional water-filling. In rich scattering scenarios, we modeled the small-scale fading channel with Nakagami-$m$ and showed its effects to the loss of ergodic capacity. Simulation results showed if $m\rightarrow\infty$, the theoretical performance in rich scattering scenarios are consistent with that in sparse scattering scenarios.

By comparing with other power allocation methods, we showed that our proposed method can provide lower average transmit power in arbitrary traffic demand cases. We also demonstrate that: 1) in low traffic demand cases, SIMO with time-domain power allocation is sub-optimal; 2) in high traffic demand cases, MIMO with even power allocation is also sub-optimal. It should be noted that in both sparse and rich scattering scenarios, the maximum transmit power constraint caused by limited module power may greatly affect the results of PAWAS and is supposed to be sufficiently large to meet the traffic demands.



\begin{appendices}

\section{Proof for Proposition \ref{prop:NonDelaySensitiveOptimalPowerAndMDS}}\label{Appen:PowerAllocationForNDS}
The Lagrangian function of $\mathbf{P}_{1\text{-}\rm{A}}$ is
\begin{equation}\label{equ:LagrangianFunction}
\begin{split}
L(\mathcal{P}(t), \eta \ln2) &= \int_{0}^{T/2} \Big\{ \mathcal{P}(t) - \eta \ln2 \Big( \log_2\big(1+\Gamma\big(\mathcal{P}(t),t\big)\mathcal{P}(t)\big) + \lambda_{\rm{i}} \Big) \Big\}dt\\
&\doteq \int_{0}^{T/2} \frac{\partial L(\mathcal{P}(t), \eta \ln2)}{\partial t}dt,\\
\end{split}
\end{equation}
where $\eta \ln2$ is the Lagrangian multiplier, and $\eta$ is denoted as the waterfilling coefficient. Minimizing this Lagrangian function is equivalent to minimizing the power allocation and antenna selection scheme point-wise. Differentiating of ${\partial L(\mathcal{P}(t), \eta \ln2)}/{\partial t}$ with respect to $\mathcal{P}(t)$ and let it be zero, we have
\begin{equation*}
\begin{split}
\frac{{\partial L(\mathcal{P}(t), \eta \ln2)}/{\partial t}}{\partial \mathcal{P}(t)} &= 1 - \eta \frac{\Gamma(\mathcal{P}(t),t) + \mathcal{P}(t){\partial \Gamma(\mathcal{P}(t),t)}/{\partial \mathcal{P}(t)}}{1+\Gamma(\mathcal{P}(t),t)\mathcal{P}(t)} = 0\\
\Rightarrow \mathcal{P}(t) &= \frac{\eta \Gamma(\mathcal{P}(t),t) - 1}{\Gamma(\mathcal{P}(t),t) - \eta \frac{\partial \Gamma(\mathcal{P}(t),t)}{\partial \mathcal{P}(t)}}.
\end{split}
\end{equation*}
Since $\mathcal{P}(t)$ is non-negative, the optimal waterfilling power is \eqref{equ:OptimalWaterFillingForNDS}. Considering the constraints $0 \leq \mathcal{P}(t) \leq \mathcal{P}_{\rm{max}}$, the instantaneous transmit power can be expressed as \eqref{equ:NDSoptPower}. Substitute \eqref{equ:NDSoptPower} into $\mathbf{P}_{1\text{-}\rm{A}}$, $\eta$ can be determined. This completes the proof.

\section{Proof for Lemma \ref{lem:MaxDelayEfficiency}}\label{Appen:MaxDelayEfficient}
Minimizing the average transmit energy in $\mathbf{P}_{1\text{-}B}$ is equivalent to minimizing the transmit power point-wise. That is, $p_{\rm{on}} \mathcal{P}(t)$ needs to be minimized. According to \cite{heyman1968optimal}, $p_{\rm{on}} = \rho$. Then, the optimization problem can be expressed as
\begin{equation}\label{equ:ConstentCapacityOptimization}
\min ~ \rho \mathcal{P}(t), \quad s.t.~\tau \leq \tau_{\rm{max}}.
\end{equation}
Substitute $C_{\rm{S}}(t)=C_{\rm{M}}(t)=\bar{L}\mu_{\rm{s}}$, \eqref{equ:MIMOCapacityLOS} and \eqref{equ:SIMOCapacityLOS} into \eqref{equ:ConstentCapacityOptimization}, $\rho \mathcal{P}(t)$ is given by
\begin{equation*}
\begin{split}
&\rho \mathcal{P}(t) =\bigg\{
\begin{array}{lc}
{\mathcal{P}(t)}/{\log_2 \big\{ \frac{\alpha_1\alpha_2 - \beta^2}{4} \mathcal{P}(t)^2 + \frac{\alpha_1 + \alpha_2}{2} \mathcal{P}(t) + 1 \big\}},    & 0\\
{\mathcal{P}(t)}/{\log_2\big( 1+\mathcal{P}(t)\alpha_2 \big)}  ,  & 1.\\
\end{array}
\end{split}
\end{equation*}
For SIMO,
\begin{equation*}
\begin{split}
\frac{\partial \left(\rho \mathcal{P}(t)\right)}{\partial \mathcal{P}(t)}&=\frac{\log_2(1+\alpha_2\mathcal{P}(t))-{\alpha_2\mathcal{P}(t)}/{\ln2(1+\alpha_2\mathcal{P}(t))}}{\log_2\big( 1+\mathcal{P}(t)\alpha_2 \big)^2}\\
&\overset{(a)}{\thickapprox} \frac{\log_2(1+\alpha_2\mathcal{P}(t))-{1}/{\ln2}}{\log_2\big( 1+\mathcal{P}(t)\alpha_2 \big)^2}~ \overset{(b)}{>}0.\\
\end{split}
\end{equation*}
The approximation (a) and inequality (b) comes from the high transmit power assumption. Similarly, ${\partial \left(\rho \mathcal{P}(t)\right)}/{\partial \mathcal{P}(t)} > 0$ can be also proved for MIMO under the high transmit power assumption.

Since $\tau$ in \eqref{equ:QueDelay} decreases with increasing $\mu$, the minimum $\rho \mathcal{P}(t)$ subjecting to $\tau \leq \tau_{\rm{max}}$ is achieved when $\tau = \tau_{\rm{max}}$. This completes the proof.

\section{Proof for Theorem \ref{thm:HighSNRNakagamiCapacity}}\label{Appen:NakagamiCapacity}
The ergodic capacity of distributed MIMO system, constituted by several centralized located receive antennas and several radio ports located far apart, is shown in \cite{zhong2009capacity}. However, in our considered HSR communication system, the receive antennas (i.e. the MRs) are distributed as well, and hence the ergodic capacity of proposed distributed antenna systems needs to be reconsidered.

To solve the $\log_2\det(\cdot)$ function in \eqref{equ:MIMOCapacityExpressionOriginal}, we give the following lemma first.
\begin{lem}\label{lem:diagonalSimplifyLem}
Let $\mathbf{Q}$ be an $n \times n$ Hermitian matrix with diagonal elements denoted by $d_{i,i}$ ($i=1,2,\dots,n$) and the eigenvalues denoted by $\lambda_{i}$ ($i=1,2,\dots,n$). Define the diagonal matrix $\mathbf{\Lambda}$ is formed by $\lambda_{i}$. That is, the diagonal elements $\Lambda_{i,i} = \lambda_{i}$ ($i=1,2,\dots,n$). Then,
\begin{equation}
\sum_{i=1}^{n} \log_2\left( 1 + d_{i,i} \right) \geq \sum_{i=1}^{n} \log_2\left( 1 + \lambda_{i} \right).
\end{equation}
\end{lem}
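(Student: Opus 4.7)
The plan is to derive this from classical majorization theory. The key observation is the Schur--Horn theorem: for any Hermitian matrix, the vector of diagonal entries is majorized by the vector of eigenvalues, and the function $f(x)=\log_2(1+x)$ is concave, so the Hardy--Littlewood--P\'olya / Karamata inequality inverts the comparison in the direction claimed.

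First I would sort both sequences in decreasing order, writing $d_{[1]}\geq d_{[2]}\geq\cdots\geq d_{[n]}$ for the ordered diagonal entries and $\lambda_{[1]}\geq\lambda_{[2]}\geq\cdots\geq\lambda_{[n]}$ for the ordered eigenvalues. The Schur--Horn theorem then gives the partial-sum inequalities $\sum_{i=1}^k \lambda_{[i]} \geq \sum_{i=1}^k d_{[i]}$ for every $k<n$, together with the trace identity $\sum_{i=1}^n d_{i,i}=\mathrm{tr}(\mathbf{Q})=\sum_{i=1}^n \lambda_i$. These are exactly the conditions defining the majorization $(\lambda_i)\succ(d_{i,i})$.

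Next, since $f(x)=\log_2(1+x)$ is concave on $(-1,\infty)$, Karamata's inequality for concave functions yields
\begin{equation*}
\sum_{i=1}^{n}\log_2\!\bigl(1+d_{[i]}\bigr)\;\geq\;\sum_{i=1}^{n}\log_2\!\bigl(1+\lambda_{[i]}\bigr),
\end{equation*}
which is precisely the claim, as both sums are invariant under reordering of their terms.

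The one delicate point is that the logarithm is only defined when $1+d_{i,i}>0$ and $1+\lambda_i>0$; this holds automatically in the intended application, since $\mathbf{Q}$ arises from the positive-semidefinite expression $\tfrac{\mathcal{P}(t)}{2}\mathbf{H}^{\dagger}\mathbf{H}$ in the capacity formula \eqref{equ:MIMOCapacityExpressionOriginal}, making all $d_{i,i}$ and $\lambda_i$ nonnegative. In this positive-semidefinite regime one obtains an even shorter derivation via Hadamard's inequality applied to $\mathbf{I}+\mathbf{Q}$: since $\mathbf{I}+\mathbf{Q}$ is positive semidefinite with diagonal entries $1+d_{i,i}$ and eigenvalues $1+\lambda_i$, Hadamard gives $\prod_i(1+\lambda_i)=\det(\mathbf{I}+\mathbf{Q})\leq\prod_i(1+d_{i,i})$, and taking $\log_2$ yields the stated inequality in a single line. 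I would present the Karamata argument as the primary proof (since the statement is for general Hermitian $\mathbf{Q}$) and mention Hadamard as the slick alternative in the PSD case actually used in Theorem~\ref{thm:HighSNRNakagamiCapacity}.
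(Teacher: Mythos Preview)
Your proposal is correct and takes essentially the same approach as the paper: the paper's one-line proof invokes \cite[Lemma~2]{zhong2009capacity} and \cite[Definition~3]{zhong2009capacity}, which are precisely the Schur--Horn majorization relation and the Schur-concavity of sums of concave functions that you spell out via Karamata. Your additional remark about Hadamard's inequality in the PSD case is a nice bonus not present in the paper.
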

\begin{proof}
Since the function $g(x) = \log_2(1+ax)$ is concave for $a>0$, substitute \cite[Lemma 2]{zhong2009capacity}  into \cite[Definition 3]{zhong2009capacity}, Lemma \ref{lem:diagonalSimplifyLem} can be easily proved.
\end{proof}

The channel fading matrix with small-scale fading can be rewritten as
\begin{equation}\label{equ:NakagamiMIMOFading}
\mathbf{H} = \begin{bmatrix} h_{11}n_{11} & h_{12}n_{12} \\ h_{21}n_{21} & h_{22}n_{22} \end{bmatrix}.
\end{equation}
Note that the time index is omitted, and $n_{ij}$ ($i,j=1,2$) obeys the Nakagami-m distribution $\gamma(m,\Omega)$. Because the small-scale fading is normalized, $\Omega=1$. Then, substitute \eqref{equ:NakagamiMIMOFading} and $\mathcal{P}_1 = \mathcal{P}_2 = \mathcal{P}/2$ into \eqref{equ:MIMOCapacityExpressionOriginal}, the capacity of MIMO can be written as
\begin{equation*}
\begin{split}
&\mathbb{E}\Big(C\big( \mathcal{P}, m, t \big)\Big) \\
&= \mathbb{E}\bigg(\log_2\Big( \mathbf{I} + \begin{bmatrix} \mathcal{P}/2 & 0 \\ 0 & \mathcal{P}/2 \end{bmatrix} \mathbf{H}^{\dagger}\mathbf{H} \Big)\bigg)\\
&\overset{(a)}{\approx} \mathbb{E}\bigg(\log_2\Big( \begin{bmatrix} \mathcal{P}/2 & 0 \\ 0 & \mathcal{P}/2 \end{bmatrix} \mathbf{H}^{\dagger}\mathbf{H} \Big)\bigg)\\
&\overset{(b)}{\leq} \log_2( \frac{\mathcal{P}\alpha_1}{2}) + \log_2( \frac{\mathcal{P}\alpha_2}{2}) + \mathbb{E}\Big(\log_2\big(\frac{h_{11}^2n_{11}^2 + h_{21}^2n_{21}^2}{h_{11}^2+h_{21}^2}\big)\Big)+ \mathbb{E}\Big(\log_2\big(\frac{h_{12}^2n_{12}^2 + h_{22}^2n_{22}^2}{h_{12}^2+h_{22}^2}\big)\Big)\\
&\overset{(c)}{=} C_{\rm{M}}(t) + {2\big(\psi(m)-\ln(m)\big)}/{\ln2}.
\end{split}
\end{equation*}
Note that the time index of $\mathcal{P}(t)$ is omitted for the concise of equation. The approximation (a) comes from the high SNR assumption. The inequality (b) follows Lemma \ref{lem:diagonalSimplifyLem}. In high SNR regime, $C_{\rm{M}}(t) = \log_2( {\mathcal{P}\alpha_1}/{2}) + \log_2( {\mathcal{P}\alpha_2}/{2})$. Since $h_{ij} \sim \gamma(m,1)$, it's clear that ${(h_{11}^2n_{11}^2 + h_{21}^2n_{21}^2)}/{(h_{11}^2+h_{21}^2)}\sim \gamma(m,1)$. According to \cite{zhong2009capacity}, the expectation is equal to $\big(\psi(m)-\ln(m)\big)/\ln2$, where $\psi(m)$ is the digamma function. Then, equation (c) can be easily derived.

Similarly, substitute $h_{12},~h_{22}\sim\gamma(m,1)$ and $\mathcal{P}_2 = \mathcal{P}$ into \eqref{equ:SIMOCapacityLOS}, \eqref{equ:MRCCapacityWithSmallScaleFaing} can be derived. This completes the proof.

\section{Proof for Lemma \ref{lem:MIMOPowerSpliting}}\label{Appen:MIMOPowerSpliting}
The left hand of inequation can be written as
\begin{equation*}
\begin{split}
&f_1(x+\Delta)+f_2(x-\Delta)\\
&\overset{(a)}{=}\log(a_1x^2+b_1x+1 + 2(a_1\Delta+b_1\Delta)) + \log(a_2x^2+b_2x+1 - 2(a_2\Delta+b_2\Delta))\\
&\overset{(b)}{=} f_1(x)+f_2(x) + \Delta\Big(\frac{(2a_1+b_1)}{a_1x^2+b_1x+1} - \frac{(2a_2+b_2)}{a_2x^2+b_2x+1}\Big)\overset{(c)}{>}f_1(x)+f_2(x).
\end{split}
\end{equation*}
In (a), $\Delta^2$ is omitted for $\Delta\rightarrow 0$, and (b) comes from $\log(1+x)=x$ ($x\rightarrow0$). Substitute $a_1 < a_2$, $b_1 < b_2$ and $x>2$ into (b), (c) can be easily derived. This completes the proof.
\end{appendices}

\bibliography{bibfile}



\end{document}